\newcommand{\ud}{\,\textrm{d}}
\newcommand{\RR}{\mathbb{R}}
\newcommand{\bxi}{\boldsymbol{\xi}}
\newcommand{\bx}{\boldsymbol{x}}
\newcommand{\by}{\boldsymbol{y}}
\newcommand{\balpha}{\boldsymbol{\alpha}}
\newcommand{\bz}{\boldsymbol{z}}
\newcommand{\bv}{\boldsymbol{v}}
\newcommand{\bu}{\boldsymbol{u}}
\newcommand{\be}{\boldsymbol{e}}
\newcommand{\FF}[1]{{}_0\mathcal{F}_0^{(#1)}}
\newcommand{\GHG}[3]{{}_{#1}\mathcal{F}_{#2}^{(#3)}}
\newcommand{\CC}[2]{\mathcal{C}_{#1}^{(#2)}}
\newcommand{\PP}[2]{\mathcal{P}_{#1}^{(#2)}}
\newcommand{\QQ}[2]{\mathcal{Q}_{#1}^{(#2)}}
\newcommand{\bone}{\boldsymbol{1}}
\newcommand{\bzero}{\boldsymbol{0}}
\newcommand{\bA}{\boldsymbol{A}}
\newcommand{\bD}{\boldsymbol{D}}
\newcommand{\bmm}{\boldsymbol{m}}
\newcommand{\bQ}{\boldsymbol{Q}}
\newcommand{\ii}{\textrm{i}}
\newtheorem{theorem}{Theorem}
\newtheorem{lemma}[theorem]{Lemma}
\begin{document}

\title[Interacting Particles on the Line and Dunkl Intertwining Operator of Type $A$]{Interacting Particles on the Line and \\Dunkl Intertwining Operator of Type $A$: \\Application to the Freezing Regime}

\author{Sergio Andraus$^1$, Makoto Katori$^2$ and Seiji Miyashita$^1$}
\address{$^1$ Department of Physics, Graduate School of Science, University of Tokyo,
7-3-1 Hongo, Bunkyo-ku, Tokyo 113-0033}
\address{$^2$ Department of Physics, Graduate School of Science and Engineering, Chuo University, 1-13-27 Kasuga, Bunkyo-ku, Tokyo 112-8551}
\ead{andraus@spin.phys.s.u-tokyo.ac.jp}

\begin{abstract}

We consider a one-dimensional system of Brownian particles that repel each other through a logarithmic potential. We study two formulations for the system and the relation between them. The first, Dyson's Brownian motion model, has an interaction coupling constant determined by the parameter $\beta >0$. When $\beta=1,2$ and 4, this model can be regarded as a stochastic realization of the eigenvalue statistics of Gaussian random matrices. The second system comes from Dunkl processes, which are defined using differential-difference operators (Dunkl operators) associated with finite abstract vector sets called root systems. When the type-$A$ root system is specified, Dunkl processes constitute a one-parameter system similar to Dyson's model, with the difference that its particles interchange positions spontaneously. We prove that the type-$A$ Dunkl processes with parameter $k> 0$ starting from any symmetric initial configuration are equivalent to Dyson's model with the parameter $\beta=2k$. We focus on the intertwining operators, since they play a central role in the mathematical theory of Dunkl operators, but their general closed form is not yet known. Using the equivalence between symmetric Dunkl processes and Dyson's model, we extract the effect of the intertwining operator of type $A$ on symmetric polynomials from these processes' transition probability densities. In the strong coupling limit, the intertwining operator maps all symmetric polynomials onto a function of the sum of their variables. In this limit, Dyson's model freezes, and it becomes a deterministic process with a final configuration proportional to the roots of the Hermite polynomials multiplied by the square root of the process time, while being independent of the initial configuration.

\end{abstract}

\pacs{05.40.Jc, 02.50.-r, 02.30.Gp}

\submitto{\JPA}

\maketitle

\section{Introduction}\label{intro}

Our object of study is a multivariate stochastic process that describes the Brownian motion of $N$ particles on the line. These particles interact repulsively with a magnitude given by the inverse of their relative distances. There are two formulations for this stochastic process. The first of the two, Dyson's Brownian motion model, was introduced as the stochastic process that the eigenvalues of a Gaussian random matrix undergo when its entries are independent Brownian motions \cite{Dyson62}. In this sense, Dyson's model is a dynamic version of Gaussian random matrices \cite{Dyson62B}, and therefore both models share the same parameter $\beta$  for each of the three Gaussian ensembles of random matrices, 1, 2 and 4 for the Gaussian orthogonal, unitary and symplectic ensembles, respectively \cite{Dyson62B,Mehta04,forrester10}. While $\beta$ is usually considered as a discrete parameter, here we will treat it as a real positive parameter. In that case, the interaction in Dyson's model is characterized by the coupling constant $\beta/2$ \cite{katoritanemura07}. Additionally, in view of the fact that the joint probability density function of the eigenvalues of Gaussian random matrices has the form of the partition function of a one-dimensional log-gas \cite{Mehta04}, $\beta$ also carries the meaning of inverse temperature. The vicious walker model formulated by Fisher \cite{fisher84} consists of $N$ random walkers who annihilate each other if they make contact; one of the most particular properties of Dyson's model is that for $\beta=2$, Dyson's model is a diffusion scaling limit of the vicious walker model with the restriction that no walkers annihilate \cite{katoritanemura02}. This restriction is equivalent to conditioning the $N$ particles never to collide, and therefore this process is also known as the non-colliding Brownian motion \cite{grabiner99,katoritanemura04}. Due to this connection, Dyson's model has found applications in polymer networks \cite{degennes,essamguttmann}, traffic models \cite{baik06}, gauge fields and nuclear physics \cite{bohigas83, bohigas85, tierz04, deharo05, forrester11}, and random growth models \cite{johansson00, johansson03}, among others. In particular, Dyson's model has been applied extensively in the Kardar-Parisi-Zhang universality class \cite{prahofer00, imamura05, sasamoto10, sasamotospohn10, amircorwinquastel11, imamura11, corwin12}, and there have been recent experimental observations of phenomena pertaining to it that confirm theoretical predictions \cite{takeuchi10, takeuchi11}. Also, because of its relationship with the vicious walker model, Dyson's model is intimately related to the mathematical areas of combinatorics, group theory and representation theory \cite{guttmann98, krattenthaler00, fulton, macdonald}. 

The second system we consider, the type-$A$ Dunkl processes, is a particular case of Dunkl processes. In general, these processes are defined using Dunkl operators, $\{T_i\}_{1\leq i\leq N}$, a set of differential-difference operators \cite{Dunkl89} that depend on a series of parameters and a finite set of vectors. This set of vectors is called root system (see \ref{RootSys}). In this paper, we will focus on the root system of type $A$ and its corresponding Dunkl operators given by \cite{dunklxu}
\begin{equation}
T_{i} f(\bx)=\frac{\partial}{\partial x_i}f(\bx)+k\sum_{\substack{j:j=1\cr j \neq i}}^N\frac{f(\bx)-f(\sigma_{ij} \bx)}{x_i-x_j}\label{DunklDefinitionA},
\end{equation}
where $k\geq 0$ is a real scalar parameter, $\bx$ is a vector in $\RR^N$, $\sigma_{ij}\bx$ denotes the vector $\bx$ with its $i$-th and $j$-th components interchanged and the function $f$ is an arbitrary differentiable scalar function. These operators are related to spatial partial derivatives by Dunkl's intertwining operator $V_k$ \cite{Dunkl91}. Consider a function $f(\bx)$ as in \eref{DunklDefinitionA}. Then, $V_k$ is defined by the following equation:
\begin{equation}
T_i V_k f(\bx)= V_k \frac{\partial}{\partial x_i}f(\bx).\label{intertwiningproperty}
\end{equation}
Also, $V_k$ is defined to have no effect on constants (i.e. $V_k 1= 1$) and when it is applied on a monomial of a given degree, it produces a homogeneous polynomial of the same total degree. Note that $V_k$ is different for each type of Dunkl operator. Through the use of $V_k$, many calculations are simplified, as it allows Dunkl operators to be treated as partial derivatives. However, no general closed form for it has been found. We will show how $V_k$ transforms the heat equation to obtain Dunkl processes as follows \cite{rosler08}. Let us denote the Laplacian operator by 
\begin{equation}
\Delta^{(x)}=\sum_{i=1}^N \frac{\partial^2}{\partial x_i^2}.
\end{equation}
The $N$-dimensional heat equation is given by
\begin{equation}
\left(\frac{\partial}{\partial t} - \frac{1}{2}\Delta^{(x)}\right)p_0(t,\by|\bx)=0,\label{heateq}
\end{equation}
where $p_0$ is the $N$-dimensional heat kernel. When regarded as a Kolmogorov backward equation with a given initial condition, the heat equation defines an $N$-dimensional Brownian motion, as it produces the transition probability density (TPD) that describes its evolution (the heat kernel). Let us apply $V_k$ on \eref{heateq}:
\begin{equation}
0=V_k\left(\frac{\partial}{\partial t} - \frac{1}{2}\Delta^{(x)}\right)p_0(t,\by|\bx)=\left(\frac{\partial}{\partial t} - \frac{1}{2}\sum_{i=1}^N T_i^2\right)V_kp_0(t,\by|\bx),\label{vkheateq}
\end{equation}
where we have used \eref{intertwiningproperty} and the fact that $V_k$ has no effect on the time variable. Note that \eref{vkheateq} has the form of a heat equation in which Dunkl operators appear instead of spatial partial derivatives. This is called the Dunkl heat equation, and it depends on the type of Dunkl operators chosen. As before, we can regard the Dunkl heat equation as a Kolmogorov backward equation and use it to define Dunkl processes. The associated TPD describes Dunkl processes' evolution. Hence, different root systems define different kinds of Dunkl processes. In the case of the type-$A$ Dunkl operators \eref{DunklDefinitionA}, the corresponding type-$A$ Dunkl processes are a one-parameter family of stochastic processes, representing $N$ particles undergoing Brownian motion and interacting repulsively through a logarithmic potential with the coupling constant $k>0$ \cite{rosler08}. However, these particles do not only repel each other, but they exchange places due to the operator $\sigma_{ij}$ in \eref{DunklDefinitionA}, which is different from Dyson's model. Dunkl processes subject to symmetric initial conditions with respect to their respective root systems are called radial Dunkl processes \cite{demni08A,demni08B}, and their properties have been an active topic of research in recent years \cite{demni09A,demni09B}. In order to emphasize their variable exchange symmetry, we will refer to the type-$A$ radial Dunkl processes as symmetric Dunkl processes.

While initially defined as a tool to study multivariate orthogonal polynomials \cite{Dunkl89}, Dunkl operators have also been used in harmonic analysis \cite{dunkl92} and stochastic processes \cite{rosler98}. In mathematical physics, they have been used for the study of the Calogero-Moser (CM) systems \cite{calogero71, vanDiejen97,forrester10}. These are quantum many-body integrable systems in which $N$ particles constrained to move in one dimension are confined by an external harmonic potential and interact with each other through a potential proportional to the square of the inverse of their relative distance. There are several versions of these systems, e.g. instead of the real line, the $N$ particles can be confined to the unit circle (the Calogero-Sutherland model \cite{hikamik96, lapointevinet96,ujinowadati96}) and can be given spin-like internal degrees of freedom \cite{hikamiwadati93}. Within our context, the most natural example is the CM system with particle exchange interaction considered by Polychronakos in \cite{polychronakos93}, which is described by the Hamiltonian
\begin{equation}
\mathcal{H}_{\textrm{XCM}}=-\frac{1}{2}\Delta^{(x)}+\frac{k^2}{2}\sum_{i=1}^{N}x_i^2+\sum_{1\leq i<j\leq N}\frac{k(k-\sigma_{ij})}{(x_i-x_j)^2}.\label{SCMsystem}
\end{equation}
Here, $k$ and $\sigma_{ij}$ are the same as in definition \eref{DunklDefinitionA}. Dunkl operators are used in this case to simplify the Hamiltonian \eref{SCMsystem} after transforming it using the function $\rme^{-kW}$, where $W(\bx)=\sum_{i=1}^Nx_i^2/2-\sum_{i<j}\log|x_i-x_j|$:
\begin{equation}
\tilde{\mathcal{H}}_{\textrm{XCM}}=-\frac{1}{k}\rme^{kW}(\mathcal{H}_{\textrm{XCM}}-E_0)\rme^{-kW}=\frac{1}{2k}\sum_{i=1}^N T_i^2 -\sum_{j=1}^N x_j\frac{\partial}{\partial x_j}.\label{transformedhamiltonian}
\end{equation}
Here, $E_0=[kN+k^2N(N-1)]/2$. Using this simplified form, one can show that this system is integrable (see \cite{forrester10}). 

After reviewing Dyson's model briefly in Section~\ref{subsectiondyson}, we will show in Section~\ref{subsectiondunklprocesses} that the TPD of symmetric Dunkl processes obeys a Kolmogorov backward equation that is identical to that of Dyson's model (Theorem~\ref{symmetricdunkllemma}). Therefore, their TPDs are equivalent and a relationship between the Dunkl parameter $k$ and the parameter $\beta$ in Dyson's model is established. This relationship provides a physical meaning to the abstract Dunkl parameter $k$, that is, $k$ can be understood as a parameter proportional to the inverse temperature. In Section~\ref{review2}, we will review the intertwining operator and its defining properties. In Section~\ref{main1theorem}, we will extract the effect of the intertwining operator on symmetric polynomials from the TPDs of Dyson's model and symmetric Dunkl processes using Theorem~\ref{symmetricdunkllemma}. This is the result of Theorem~\ref{maintheorem}. In order to gain a better understanding of this result, we will make observations on the particular case of quadratic symmetric polynomials in Section~\ref{quadraticsubsection}. We will find that in the $k\to\infty$ limit, $V_k$ turns all symmetric functions into a function of the sum of their variables. We will prove this fact in Section~\ref{inflimit} as Theorem~\ref{conjecture2}. Finally, in Section~\ref{main3} we will make use of the behaviour of $V_{k\to\infty}$ to investigate the limit $k=\beta/2\to\infty$ (strong coupling limit) of the TPD of the symmetric Dunkl process (or its equivalent Dyson's model). We will prove in Theorem~\ref{lasttheorem} that in this limit, which we call the freezing regime, this process becomes deterministic and its final configuration is proportional to a vector composed of the roots of the $N$-th Hermite polynomial multiplied by the square root of the process time. We will also find that the form of $V_{k\to\infty}$ causes the symmetric Dunkl process to be independent of its initial condition in the freezing regime.

\section{Dyson's Brownian Motion Model and Dunkl Processes}\label{review1}

\subsection{Dyson's Brownian Motion Model}\label{subsectiondyson}

Dyson's Brownian motion model \cite{Dyson62} is a stochastic process in which $N$ particles undergo Brownian motion in one dimension while interacting repulsively with each other with a strength proportional to the inverse of the distances between them. Dyson's model is described by the Kolmogorov backward equation (see, e.g. Remark 1 in \cite{katoritanemura07})
\begin{eqnarray}
\frac{\partial}{\partial t}P_\beta(t,\by|\bx)&=&\frac{1}{2}\Delta^{(x)}P_\beta(t,\by|\bx)+\frac{\beta}{2}\sum_{i=1}^N\sum_{\substack{j=1:\cr j\neq i}}^N\frac{1}{x_i-x_j}\frac{\partial}{\partial x_i}P_\beta(t,\by|\bx),\label{DysonBKE}
\end{eqnarray}
where $P_\beta(t,\by|\bx)$ is the TPD that the particles reach the positions $\by=(y_1, \ldots, y_N)$ after a time $t$, given that they started from $\bx=(x_1, \ldots, x_N)$. While we denote vectors in $\RR^N$ using boldface symbols, we will use the notation $x^2=\bx\cdot\bx$ to denote their squared norm. 



Note that when $\beta\to 0$, the interaction term in \eref{DysonBKE} vanishes leaving an $N$-dimensional Brownian motion, which means that the fluctuations in the system are much larger in magnitude than the interaction between particles. 
On the other hand, when $\beta\to\infty$, the diffusion term is negligible compared to the interaction term, and so the randomness in the system disappears. In other words, the system freezes.

A consequence of the results in \cite{bakerforrester97} and \cite{rosler98} (see, e.g. \cite{demni08A}) is that, for $N$-dimensional vectors $\bx$ and $\by$ ordered so that $y_i<y_j$ and $x_i<x_j\ \textrm{for}\ i<j$, the TPD of Dyson's model is given by
\begin{eqnarray}
\fl P_\beta(t,\by|\bx)&=&\frac{N! \rme^{-(x^2+y^2)/2t}}{(2\pi t)^{N/2}}\prod_{j=1}^N\left[\frac{\Gamma(1+\beta/2)}{\Gamma(1+j\beta/2)}\right]\left|h_N\left(\frac{\by}{\sqrt t}\right)\right|^{\beta}\FF{2/\beta}\left(\frac{\bx}{\sqrt t},\frac{\by}{\sqrt t}\right),\label{TPDDyson1}
\end{eqnarray}
where $\Gamma(x)$ is the gamma function, $h_N(\by)=\prod_{1\leq i<j\leq N}(y_j-y_i)$ is the Vandermonde determinant and $\GHG{0}{0}{2/\beta}$ is the generalized hypergeometric function (see \ref{MVSFns}). In the case $\beta=2$, the non-colliding Brownian motion, Dyson's model has the following TPD as shown by Grabiner \cite{grabiner99}:
\begin{equation}
P_{\beta=2}(t,\by|\bx)=\frac{h_N(\by)}{h_N(\bx)}\det_{1\leq i,j\leq N}\left(\frac{\rme^{-(x_i-y_j)^2/2t}}{\sqrt{2\pi t}}\right).\label{noncolTPD}
\end{equation}

\subsection{Dunkl Processes}\label{subsectiondunklprocesses}

Within our present setting, we are interested in the type-$A$ Dunkl operators given by \eref{DunklDefinitionA} (see \ref{RootSys} and \ref{dunklstuff}). Using these operators, one can define Dunkl processes as follows \cite{rosler98}: consider the TPD, $p_k(t,\by|\bx)$, that solves the Dunkl heat equation with the parameter $k$
\begin{equation}\label{dunklheatequation}
\frac{\partial}{\partial t}p_k(t,\by|\bx)=\frac{1}{2}\sum_{i=1}^NT_i^2p_k(t,\by|\bx)
\end{equation}
as its Kolmogorov backward equation. Then the stochastic process that obeys $p_k(t,\by|\bx)$ as its TPD is defined as the Dunkl process of type $A$. Direct insertion of \eref{DunklDefinitionA} into \eref{dunklheatequation} yields \cite{dunklxu}
\begin{eqnarray}
\frac{\partial}{\partial t}p_k(t,\by|\bx)&=&\frac{1}{2}\Delta^{(x)}p_k(t,\by|\bx)+\sum_{i=1}^N\sum_{\substack{j=1:\cr j\neq i}}^N\frac{k}{x_i-x_j}\frac{\partial}{\partial x_i}p_k(t,\by|\bx)\nonumber\\
&&-\frac{k}{2}\sum_{i=1}^N\sum_{\substack{j=1:\cr j\neq i}}^N\frac{p_k(t,\by|\bx)-p_k(t,\by|\sigma_{ji}\bx)}{(x_j-x_i)^2}.\label{BKEqNDWeyl}
\end{eqnarray}
Reading off the terms of this Kolmogorov backward equation gives us information about Dunkl processes: the first term, a Laplacian acting on $\bx$, is a diffusion term, while the second term is a drift term that drives the process away from the surfaces defined by the equations $x_i=x_j$ for any $i\neq j$. The third term is a spontaneous jump term, which makes the process jump from $\bx$ to $\sigma_{ij}\bx$ for some $i\neq j$ so long as $p_k(t,\by|\bx)$ is not symmetric under permutations of $\bx$.

With the observation about the jump term in mind, comparing \eref{DysonBKE} and \eref{BKEqNDWeyl} reveals a clear similarity between the two Kolmogorov backward equations. Let us define the symmetric group $S_N$ as the set of all the permutations of $N$ objects with composition as its group operation. The permutation $\rho\in S_N$ of the vector $\bx$ is given by
\begin{equation}
\rho\bx=(x_{\rho(1)},\ldots,x_{\rho(N)}).
\end{equation}
We will consider the TPD of a Dunkl process with a symmetric initial condition. As mentioned in Section~\ref{intro}, we will call such processes symmetric Dunkl processes. Given the symmetric distribution
\begin{equation}
\mu_{\bx}^s(\bz)=\sum_{\rho\in S_N}\delta^N(\bz-\rho\bx),
\end{equation}
where we define $\delta^N(\bx-\by)=\prod_{j=1}^N\delta(x_j-y_j)$, we write
\begin{equation}
p^s_k(t,\by|\bx)=\int_{\RR^N}p_k(t,\by|\bx)\mu_{\bx}^s(\bz)\ud^N z=\sum_{\rho\in S_N}p_k(t,\by|\rho\bx),\label{pks}
\end{equation}
with $\ud^N z=\prod_{j=1}^N \ud z_j$. In the following theorem, we show that the Kolmogorov backward equation obeyed by the TPD of the symmetric Dunkl process, $p^s_k(t,\by|\bx)$, is identical to that of the TPD of Dyson's model, as noted by Demni \cite{demni08A}.

\begin{theorem}\label{symmetricdunkllemma}
The TPD of a symmetric Dunkl process, $p_k^s$, solves the Kolmogorov backward equation of Dyson's model with parameter
\begin{equation}\label{parameterequation}
\beta=2k.
\end{equation}
That is, the equivalence
\begin{equation}
p_k^s(t,\by|\bx)=P_{\beta=2k}(t,\by|\bx)
\end{equation}
is established.
\end{theorem}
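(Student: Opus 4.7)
The plan is to apply the $S_N$-symmetrization operator $\Pi^s_{\bx}$, defined by $(\Pi^s_{\bx} f)(\bx)=\sum_{\rho\in S_N}f(\rho\bx)$, to both sides of the Dunkl backward equation \eref{BKEqNDWeyl} satisfied by $p_k$. By the definition \eref{pks} one has $p_k^s(t,\by|\bx)=\Pi^s_{\bx}p_k(t,\by|\bx)$, and by construction this function is $S_N$-invariant in its $\bx$-argument. This last property is precisely what will make the non-local jump term in \eref{BKEqNDWeyl} disappear, leaving exactly \eref{DysonBKE}.

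First I would check that $\Pi^s_{\bx}$ commutes with each of the differential operators appearing on the right-hand side of \eref{BKEqNDWeyl}. Commutation with $\partial/\partial t$ is immediate. The Laplacian $\Delta^{(x)}$ and the drift operator $\sum_i\sum_{j\neq i}(x_i-x_j)^{-1}\partial/\partial x_i$ are both $S_N$-invariant as differential operators on $\RR^N$, so they commute with $\Pi^s_{\bx}$; concretely, after applying them to $p_k(t,\by|\rho\bx)$ one re-indexes the particle sums via $i\mapsto\rho(i)$ and $j\mapsto\rho(j)$, which reduces each expression to the operator acting on $p_k$ evaluated at the permuted argument. Summing over $\rho\in S_N$ therefore reproduces the same operator applied to $p_k^s$, with coupling $k$ (which, once combined with the $\frac{1}{2}$ from the Laplacian in the eventual identification, gives $\beta/2=k$).

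The decisive step is the vanishing of the jump term. After symmetrization it reads
\begin{equation*}
-\frac{k}{2}\sum_{i=1}^N\sum_{j\neq i}\frac{p_k^s(t,\by|\bx)-p_k^s(t,\by|\sigma_{ji}\bx)}{(x_j-x_i)^2},
\end{equation*}
which vanishes identically because $p_k^s$ is $S_N$-symmetric in $\bx$. Equivalently, reindexing $\rho\mapsto(ij)\rho$ inside $\sum_{\rho\in S_N}p_k(t,\by|\sigma_{ji}\rho\bx)$ recovers $\sum_{\rho\in S_N}p_k(t,\by|\rho\bx)$, so the two summands cancel pairwise. Combined with the previous step, this shows that $p_k^s$ satisfies \eref{DysonBKE} with the identification $\beta=2k$.

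To promote this PDE identity to the claimed equivalence $p_k^s=P_{\beta=2k}$, I would invoke uniqueness for the parabolic Cauchy problem \eref{DysonBKE}: both functions reduce to the same symmetric initial distribution concentrated on the $S_N$-orbit of $\bx$ and have matching decay at infinity. I do not anticipate a serious technical obstacle; all of the substantive content is concentrated in the jump-term cancellation, which is the analytic incarnation of the physical fact that a symmetric Dunkl process never visibly realizes the exchanges $\bx\mapsto\sigma_{ij}\bx$ because its law is already invariant under them.
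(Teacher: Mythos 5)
Your argument is correct and follows essentially the same route as the paper: symmetrize the Dunkl backward equation \eref{BKEqNDWeyl} over $S_N$, note that the diffusion and drift terms are permutation-invariant, and observe that the jump term vanishes because $p_k^s$ is symmetric in $\bx$, leaving \eref{DysonBKE} with $\beta=2k$. Your closing appeal to uniqueness of the parabolic Cauchy problem is a reasonable way to make explicit the identification $p_k^s=P_{\beta=2k}$, which the paper leaves implicit.
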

\begin{proof}
Comparing \eref{BKEqNDWeyl} and \eref{DysonBKE}, we see that the only difference between the two is the third term on the RHS of \eref{BKEqNDWeyl}. 
Because none of the sums on the RHS of \eref{BKEqNDWeyl} change if the order of the summands is permuted, it follows that $p_k^s(t,\by|\bx)$ is also a solution of \eref{BKEqNDWeyl}. In particular, since $p_k^s(t,\by|\bx)$ is symmetric in $\bx$, the third term on the RHS of \eref{BKEqNDWeyl} vanishes, leaving us with \eref{DysonBKE} provided we set $k=\beta/2$.
\end{proof}
Note that, because $p_k(t,\by|\bx)$ is a probability density function on $\by$, its integral with respect to $\by$ over $\RR^N$ is equal to one, and so $p_k^s(t,\by|\bx)$ integrates to $N!$. We will exploit the equivalence between $P_\beta$ and $p_k^s$ in Section~\ref{main1}.


\section{The Intertwining Operator}\label{review2}

In the previous section we obtained some information about the TPD of Dunkl processes and how it is related to Dyson's model. However, this TPD can also be calculated using the intertwining operator $V_k$, \eref{intertwiningproperty} \cite{dunklxu,Dunkl91} (see \ref{dunklstuff} for details). 

Dunkl originally defined $V_k$ for the case in which the function $f(\bx)$ in \eref{intertwiningproperty} is a polynomial, and proved its existence for $k\geq 0$ using a recursive construction. However, its explicit form is unknown in general, and known only in a few particular cases. From \eref{DunklDefinitionA}, one can note that when $k=0$, the intertwining operator is the identity operator.

The intertwining operator relates $N$-dimensional Brownian motion to Dunkl processes as follows. Let us consider the $N$-dimensional heat equation \eref{heateq} with its solution
\begin{equation}
p_0(t,\by|\bx)=\frac{\rme^{-(x^2+y^2)/2t}}{(2\pi t)^{N/2}}\rme^{\bx\cdot\by/t}\label{BMTPD}
\end{equation}
given the initial condition $p_0(0,\by|\bx)=\delta^N(\bx-\by)$. This solution is both the Green function of the diffusion equation and the TPD of the $N$-dimensional Brownian motion. When we apply $V_k$ on \eref{heateq}, we obtain the Dunkl heat equation. From this, we deduce that the Green function for the Dunkl heat equation, which we denote by $\Gamma_k(t,\by|\bx)$, is given by 
\begin{equation}
\Gamma_k(t,\by|\bx)=V_kp_0(t,\by|\bx).
\end{equation}

On the other hand, the Green function for the Dunkl heat equation can be calculated directly using the Dunkl transform \cite{dunkl92}, a generalization of the Fourier transform, in the same way that the Fourier transform can be used to derive \eref{BMTPD}, see \cite{rosler98}. In order to define the Dunkl transform, one needs to calculate the eigenfunction $E_k(\bx,\bxi)$ of the Dunkl operators $\{T_i\}_{i=1,\ldots,N}$ of parameter $k$ with eigenvalues $\bxi=(\xi_1,\ldots,\xi_N)$, also known as the Dunkl kernel \cite{Dunkl91}:
\begin{equation}
T_iE_k(\bx,\bxi)=\xi_{i}E_k(\bx,\bxi),\quad i=1,2,\ldots,N\label{eigvlprob},
\end{equation}
under the condition $E_k(\bzero,\bxi)=1$ with $\bzero=(0,\ldots,0)$. It can be expressed as
\begin{equation}
E_k(\bx,\bxi)=V_k \rme^{\bxi\cdot\bx}.\label{ansatz}
\end{equation}
One can prove that this function solves \Eref{eigvlprob} as follows:
\begin{equation}
T_iE_k(\bx,\bxi)=T_iV_k \rme^{\bxi\cdot\bx}=V_k \frac{\partial}{\partial x_i} \rme^{\bxi\cdot\bx}=\xi_{i}V_k \rme^{\bxi\cdot\bx}=\xi_{i}E_k(\bx,\bxi). \label{insertingansatz}
\end{equation}
Here the second equality follows from the definition of the intertwining operator and the third equality follows from the fact that $V_k$ is linear. In addition, when $E_k(\bx,\bxi)$ is written as a power series, each term is a homogeneous polynomial of $\bx$, whose degree is conserved by $V_k$. When the limit $\bx\to\bzero$ is taken, all homogeneous polynomials of degree larger than zero vanish and the only remaining term (the zero-order polynomial) is equal to one. 
%
$E_k(\bx,\bxi)$ is sufficiently well behaved to be used as the kernel for the aforementioned Dunkl transform \cite{rosler08} by replacing $\bx$ with $-\ii\bx$. 


With this transform, one can calculate $\Gamma_k(t,\by|\bx)$ and the TPD of Dunkl processes in general (see \ref{dunklstuff}, \eref{TPDk} for the explicit relationship between $\Gamma_k$ and $p_k$). However, because the Dunkl kernel is contained in it, and the intertwining operator is unknown in general, this TPD is not explicitly specified. In our case, $p_k(t,\by|\bx)$ is given by
\begin{equation}
p_k(t,\by|\bx)=\prod_{j=1}^N\frac{\Gamma(1+k)}{\Gamma(1+jk)}\left|h_N\left(\frac{\by}{\sqrt t}\right)\right|^{2k}\frac{\rme^{-(x^2+y^2)/2t}}{(2\pi t)^{N/2}}E_k\left(\frac{\bx}{\sqrt{t}},\frac{\by}{\sqrt{t}}\right).\label{TPDkA}
\end{equation}
The last factor on the right is the only unknown part of this TPD. Let us note that in order to obtain $p_k$, we need the Dunkl kernel, which in turn requires us to calculate the effect of the intertwining operator on the exponential $\rme^{\bx\cdot\by}$. Conversely, if we have $p_k$, we can obtain the Dunkl kernel, and by expanding it into sums of suitably chosen polynomials, we can extract the effect of the intertwining operator on any polynomial.

\section{Extraction of $V_k$ from Dyson's model}\label{main1}

\subsection{Theorem}\label{main1theorem}

Here, we will obtain an expression for the effect of $V_k$ on symmetric polynomials. Using Theorem~\ref{symmetricdunkllemma}, we combine \eref{pks} and \eref{TPDkA}, equate the result to \eref{TPDDyson1}, and rescale $\bx$ and $\by$ by a factor of $\sqrt t$ to obtain
\begin{equation}
\sum_{\rho\in S_N}E_k\left(\rho \bx,\by\right)=N!\FF{1/k}\left(\bx,\by\right). \label{GHGequalsSDK}
\end{equation}
From \eref{GHGequalsSDK}, we extract the effect of the intertwining operator when applied to a symmetric polynomial. 

As preparation for the sequel, let us introduce the following definitions \cite{macdonald}: a partition $\tau$ of an integer $n$ is a non-negative integer vector such that its components are arranged in decreasing order while adding up to $n$. The components of $\tau$ are called parts, and its length, $l(\tau)$, is the number of non-zero parts in it.  The sum of its parts is called the modulus of the partition and is denoted by $|\tau|$. In other words, $\tau$ is an integer partition of $n$ if $\tau_i\geq \tau_j$ for $i<j$, and 
\begin{equation}
|\tau|=\sum_{i=1}^{l(\tau)} \tau_i=n.
\end{equation}
For example, $\tau=(\tau_1,\tau_2,\ldots)=(4,4,2,0,\ldots)$ is an integer partition of $n=10$ and has length 3. Additionally, the symbol $\tau !$ denotes the product $\prod_{j=1}^{l(\tau)}\tau_j !$. We will also use the natural ordering for partitions of the same integer defined by
\begin{equation}
\lambda<\tau\ \Leftrightarrow\ \lambda_1+\ldots+\lambda_j \leq \tau_1+\ldots+\tau_j\ {}^\forall j=1,\ldots,N,
\end{equation}
where $\lambda$ and $\tau$ are assumed to be different partitions.

We say that the ordered pair $(i,j)$ is in the partition $\tau$ (i.e. $(i,j)\in \tau$) when $1\leq i \leq l(\tau)$ and $1\leq j \leq \tau_i$. The number $\tau_j^\prime$ is the number of parts $\tau_l$ such that $\tau_l\geq j$. Finally, the partition formed by the $\tau_j^\prime$ is denoted by $\tau^\prime$, and it is called the conjugate partition of $\tau$.

We will make immediate use of two particular families of symmetric polynomials, the monomial symmetric functions and the Jack functions. Given an integer partition $\tau$, the monomial symmetric functions are given by
\begin{equation}\label{monomialsymmetric}
m_\tau(\bx)=\sum_\sigma \prod_{j=1}^N x_j^{\tau_{\sigma (j)}},
\end{equation}
where the sum is taken over all permutations $\sigma$ such that each monomial $\prod_{j=1}^N x_j^{\tau_{\sigma (j)}}$ is distinct. It is known that the Jack functions are part of the eigenfunctions of the periodic type-$A$ Calogero-Moser-Sutherland model \cite{BakerForrester97B}. They are also used to calculate the symmetric eigenfunctions of the type-$A$ Calogero-Moser system described by the Hamiltonian \eref{SCMsystem}. Within our context, they are useful to describe the action of $V_k$. The Jack function of parameter $1/k>0$, $\PP{\tau}{1/k}(\bx)$, is defined as the polynomial eigenfunction of the operator \cite{stanley89}
\begin{equation}
\left(\sum_{i=1}^Nx_i^2\frac{\partial^2}{\partial x_i^2}+2k\sum_{1\leq i\neq j\leq N}\frac{x_i^2}{x_i-x_j}\frac{\partial}{\partial x_i}\right)\PP{\tau}{1/k}(\bx)=E_{\tau,k}\PP{\tau}{1/k}(\bx)
\end{equation}
with eigenvalue 
\begin{equation}
E_{\tau,k}=\sum_{j=1}^N\tau_j[\tau_j-1-2k(j-1)]+|\tau|(N-1).
\end{equation}

Let $l_j^{\mu}$ represent the multiplicity of the $j$-th (distinct) part of $\mu$, where the subscript $P$ in $l_P^\mu$ refers to the number of distinct parts of $\mu$. In the cases where $l(\mu)<N$, there are $N-l(\mu)$ zero parts in $\mu$ and therefore $l_P^\mu=N-l(\mu)$ accounts for the multiplicity of zero parts in the first $N$ parts of $\mu$. For example, if $N=6$ and $\mu=(5,3,2,2)=(5,3,2,2,0,0)$, then $P=4$ and $l_1^{\mu}=1$, $l_2^{\mu}=1$, $l_3^{\mu}=2$ and $l_4^{\mu}=2$. Using this notation we define the following multinomial coefficient:
\begin{equation}
M(\mu,N)=\frac{N!}{l_1^{\mu}!\cdots l_P^{\mu}!}.\label{functionM}
\end{equation}
This function represents the number of distinct permutations of $\mu$ when it is considered as an $N$-dimensional vector. 

See \ref{MVSFns} for the representation of Jack functions in terms of the monomial symmetric functions and the definitions of the matrix $\{u_{\tau\lambda}(1/k)\}$, the generalized Pochhammer symbol $(kN)_\tau^{(1/k)}$ and the constants $c_\tau (1/k)$ and $c_\tau^\prime (1/k)$.

\begin{theorem}\label{maintheorem}
The action of the intertwining operator on a symmetric polynomial is given by the equation
\begin{equation}\label{mainresult}V_k m_\lambda(\bx)=\lambda!M(\lambda,N)\sum_{\substack{\tau:l(\tau)\leq N\cr |\tau|= |\lambda|}}\frac{c_\tau (1/k)}{c_\tau^\prime (1/k)}\frac{u_{\tau\lambda}(1/k)}{(kN)_\tau^{(1/k)}}\PP{\tau}{1/k}(\bx).
\end{equation}
That is, the intertwining operator of type $A$ maps the monomial symmetric functions $m_\lambda(\bx)$ onto a linear combination of Jack functions of parameter $1/k$, $\PP{\tau}{1/k}(\bx)$. 
\end{theorem}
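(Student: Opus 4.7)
The plan is to extract the action of $V_k$ on monomial symmetric functions by comparing the monomial-basis expansions of the two sides of identity \eref{GHGequalsSDK}. First I would rewrite the left-hand side using the symmetry $E_k(\bx,\by)=E_k(\by,\bx)$ of the Dunkl kernel together with the defining relation $E_k(\by,\bz)=V_k\rme^{\bz\cdot\by}$ (with $V_k$ acting on $\by$), yielding
\begin{equation*}
\sum_{\rho\in S_N}E_k(\rho\bx,\by)=V_k\Bigl[\sum_{\rho\in S_N}\rme^{\rho\bx\cdot\by}\Bigr],
\end{equation*}
where $V_k$ now acts on the $\by$ variable and $\bx$ is a parameter. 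The symmetrized exponential on the right can be expanded in the monomial basis by the multinomial theorem: after grouping multi-indices $\alpha$ according to the underlying partition $\lambda=\alpha^\downarrow$ (of length at most $N$) and noting that the stabilizer of $\alpha$ in $S_N$ has size $N!/M(\lambda,N)$, a short calculation yields the bilinear identity
\begin{equation*}
\sum_{\rho\in S_N}\rme^{\rho\bx\cdot\by}=N!\sum_{\lambda}\frac{m_\lambda(\bx)\,m_\lambda(\by)}{\lambda!\,M(\lambda,N)}.
\end{equation*}
Applying $V_k$ termwise then writes the left-hand side of \eref{GHGequalsSDK} as $N!\sum_{\lambda}m_\lambda(\bx)\,V_km_\lambda(\by)/[\lambda!\,M(\lambda,N)]$.

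Next I would produce a matching expansion of the right-hand side $N!\,\FF{1/k}(\bx,\by)$. The idea is to invoke the standard bilinear representation of the generalized hypergeometric function in Jack functions collected in Appendix \ref{MVSFns}, namely
\begin{equation*}
\FF{1/k}(\bx,\by)=\sum_{\tau:\,l(\tau)\leq N}\frac{c_\tau(1/k)}{c_\tau^\prime(1/k)\,(kN)_\tau^{(1/k)}}\,\PP{\tau}{1/k}(\bx)\,\PP{\tau}{1/k}(\by),
\end{equation*}
and then apply the triangular change of basis $\PP{\tau}{1/k}(\bx)=\sum_{\lambda}u_{\tau\lambda}(1/k)\,m_\lambda(\bx)$ to convert the $\bx$ factor into the monomial basis. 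After this, both sides of \eref{GHGequalsSDK} are linear combinations of $m_\lambda(\bx)$ whose coefficients are symmetric polynomials in $\by$.

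The final step is to equate the coefficients of $m_\lambda(\bx)$. Because $V_k$ preserves homogeneous degree, and because Jack functions with $l(\tau)>N$ vanish identically in $N$ variables, for each $\lambda$ only finitely many $\tau$ contribute (precisely those with $|\tau|=|\lambda|$ and $l(\tau)\leq N$), so the comparison is unambiguous and rearranging gives \eref{mainresult}.

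The main obstacle, I expect, will be verifying the Jack-function expansion of $\FF{1/k}$ with exactly the normalization appearing in the target formula — in particular that the correct coefficient is $c_\tau(1/k)/[c_\tau^\prime(1/k)(kN)_\tau^{(1/k)}]$ and not a differently normalized combination of these constants. Once that identity is secured from the conventions fixed in Appendix \ref{MVSFns}, the remainder of the argument is elementary symmetric-function bookkeeping.
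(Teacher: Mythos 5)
Your proposal is correct and follows essentially the same route as the paper: both arguments start from \eref{GHGequalsSDK}, expand the symmetrized exponential as $N!\sum_\lambda m_\lambda(\bx)m_\lambda(\by)/[\lambda!\,M(\lambda,N)]$, and compare with the Jack-function expansion \eref{RadDunklKerJackP} via the triangular change of basis \eref{JackP}. The only difference is organizational: by invoking the symmetry $E_k(\bx,\by)=E_k(\by,\bx)$ you let $V_k$ act on $\by$ and equate coefficients of $m_\lambda(\bx)$ directly, whereas the paper lets $V_k$ act on $\bx$, equates coefficients of $\PP{\tau}{1/k}(\by)$, and then undoes the matrix $u$ at the end --- equivalent bookkeeping leading to the same formula \eref{mainresult}.
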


\begin{proof}
We first expand the LHS of \eref{GHGequalsSDK} in terms of symmetric polynomials. For this purpose, we expand the symmetrized exponential $\sum_{\rho\in S_N}\exp(\rho\bx\cdot\by)$ in terms of the monomial symmetric functions $m_\mu(\bx)$.
\begin{eqnarray}
\sum_{\rho\in S_N}\rme^{\rho\bx\cdot\by}&=&\sum_{\rho\in S_N}\sum_{n=0}^\infty\sum_{\substack{\mu:l(\mu)\leq N\cr |\mu|=n}}\frac{1}{\mu!}\sum_{\substack{\tau\in S_N:\cr \tau(\mu)\ \textrm{distinct}}}\prod_{j=1}^N(x_{\rho(j)}y_j)^{\mu_{\tau(j)}}\nonumber\\
&=&\sum_{\mu:l(\mu)\leq N}\frac{1}{\mu!}\sum_{\substack{\tau\in S_N:\cr \tau(\mu)\ \textrm{distinct}}}\left\{\sum_{\rho\in S_N}\prod_{j=1}^Nx_{\rho(j)}^{\mu_{\tau(j)}}\right\}\prod_{j=1}^Ny_j^{\mu_{\tau(j)}}\nonumber\\
&=&\sum_{\mu:l(\mu)\leq N}\frac{1}{\mu!}\left\{\sum_{\rho^\prime\in S_N}\prod_{j^\prime=1}^Nx_{j^\prime}^{\mu_{\rho^\prime(j^\prime)}}\right\}\sum_{\substack{\tau\in S_N:\cr \tau(\mu)\ \textrm{distinct}}}\prod_{j=1}^Ny_j^{\mu_{\tau(j)}}
\end{eqnarray}
In the last line we have used the substitutions $j^\prime=\rho(j)$ and $\rho^\prime(j^\prime)=\tau[\rho^{-1}(j^\prime)]$. The last term on the right is, by definition, $m_\mu(\by)$. The term inside the braces is equal to $m_\mu(\bx)$ multiplied by the number of non-distinct permutations of $\mu$. Using \eref{functionM}, we write the above as
\begin{equation}
\sum_{\rho\in S_N}\rme^{\rho\bx\cdot\by}=\sum_{\mu:l(\mu)\leq N}\frac{N!m_\mu(\bx)m_\mu(\by)}{\mu!M(\mu,N)}.\label{SymDKExp}
\end{equation}
Applying $V_k$ on the above yields the LHS of \eref{GHGequalsSDK} expanded in terms of a sum of monomial symmetric functions. The next step is to eliminate the variable $\by$ using the orthogonality of Jack functions. Insertion of the inverse of \eref{JackP} in \eref{SymDKExp} after applying $V_k$ yields
\begin{equation}
\sum_{\rho\in S_N}E_{k}(\rho\bx,\by)=V_{k}\sum_{\mu:l(\mu)\leq N}\frac{N!m_\mu(\bx)}{\mu!M(\mu,N)}\sum_{\substack{\nu:\nu\leq\mu\cr |\nu|= |\mu|}}(u^{-1})_{\mu\nu}(1/k)\PP{\nu}{1/k}(\by).\label{SymmDunklKP}
\end{equation}
Using \eref{SymmDunklKP} and \eref{RadDunklKerJackP}, \eref{GHGequalsSDK} becomes
\begin{eqnarray}
V_{k}\sum_{\mu:l(\mu)\leq N}\frac{m_\mu(\bx)}{\mu!M(\mu,N)}\sum_{\substack{\nu:\nu\leq\mu\cr |\nu|= |\mu|}}(u^{-1})_{\mu\nu}(1/k)\PP{\nu}{1/k}(\by)&&\nonumber\\
\qquad\qquad\qquad\qquad\qquad 
=\sum_{\tau:l(\tau)\leq N}\frac{c_\tau (1/k)}{c_\tau^\prime (1/k)}\frac{\PP{\tau}{1/k}(\bx)\PP{\tau}{1/k}(\by)}{(kN)_\tau^{(1/k)}},&&
\end{eqnarray}
and from the orthogonality of Jack functions \cite{macdonald} and the linearity of $V_k$, which acts only on $\bx$, we can equate the coefficients of the same Jack functions of $\by$ to obtain
\begin{equation}
\sum_{\substack{\mu:l(\mu)\leq N\cr |\mu|=|\tau|}}\frac{(u^{-1})_{\mu\tau}(1/k)}{\mu!M(\mu,N)}V_{k}m_\mu(\bx)=\frac{c_\tau (1/k)}{c_\tau^\prime (1/k)}\frac{\PP{\tau}{1/k}(\bx)}{(kN)_\tau^{(1/k)}}.\label{tobeinverted}
\end{equation}
This relation is solved for $V_km_\lambda(\bx)$ if we apply the sum $\sum_\tau u_{\tau\lambda}(1/k)$ on both sides. From this last operation, the theorem follows.
\end{proof}

Theorem~\ref{maintheorem} is consistent with the expected results at particular values of $k$. For $k=0$, $V_k$ is reduced to the identity operator, for $k=1$ one obtains the known TPD of the non-colliding Brownian motion \eref{noncolTPD} \cite{grabiner99}, and in the case where all components of $\bx$ are equal, one recovers the known TPD of Dyson's model when all particles start from the same position \cite{forrester10,katoritanemura02}.

\subsection{Effect of $V_k$ on symmetric quadratic equations}\label{quadraticsubsection}

Although Theorem~\ref{maintheorem} is consistent with known results, it does not give an immediate understanding of the effect of $V_k$ on symmetric polynomials. The purpose of this subsection is to consider a simple case of Theorem~\ref{maintheorem} and make observations of the general properties of $V_k$ in order to obtain insights on the behaviour of symmetric Dunkl processes and Dyson's model. Let us consider the effect of the intertwining operator on the quadratic equations $m_2(\bx)=\sum_{i=1}^Nx_i^2=1$ and $m_{11}(\bx)=\sum_{1\leq i<j\leq N}x_ix_j=1$.
\numparts
\begin{eqnarray}
V_k\sum_{j=1}^N x_j^2=\frac{k+1}{kN+1}\sum_{j=1}^N x_j^2+\frac{2k}{kN+1}\sum_{1\leq i<j\leq N}x_ix_j=1\label{degree2a}\\
V_k\sum_{1\leq i<j\leq N} x_ix_j=\frac{k(N-1)}{2(kN+1)}\sum_{j=1}^Nx_j^2+\frac{k(N-1)+1}{kN+1}\sum_{1\leq i<j\leq N}x_i x_j=1 \label{degree2b}
\end{eqnarray}
\endnumparts
From the coefficients of the sums above, it is clear that when $k=0$ the polynomials on the LHS are unchanged. When $k=1$, these equations become
\numparts
\begin{eqnarray}
V_{k=1}\sum_{j=1}^N x_j^2&=&\frac{2}{N+1}\left(\sum_{j=1}^N x_j^2+\sum_{1\leq i<j\leq N}x_ix_j\right)=1,\\
V_{k=1}\sum_{1\leq i<j\leq N} x_ix_j&=&\frac{(N-1)}{2(N+1)}\sum_{j=1}^Nx_j^2+\frac{N}{N+1}\sum_{1\leq i<j\leq N}x_i x_j=1.
\end{eqnarray}
\endnumparts
\begin{figure}[!t]
  \centering
  \subfloat[$V_{k=0}\sum_{j=1}^3 x_j^2=1$]{\includegraphics[width=0.3\textwidth]{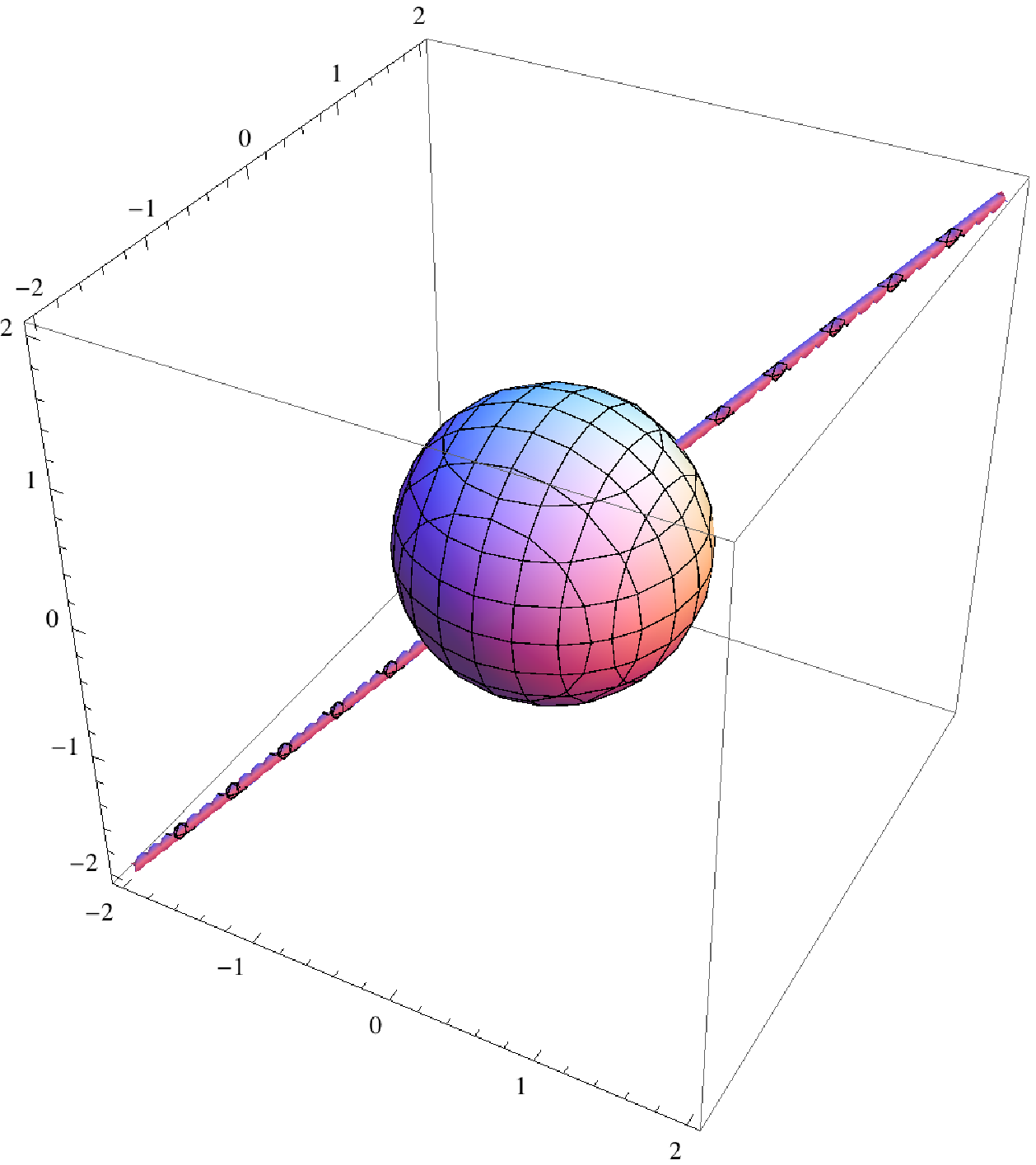}}\                 
  \subfloat[$V_{k=1}\sum_{j=1}^3 x_j^2=1$]{\includegraphics[width=0.3\textwidth]{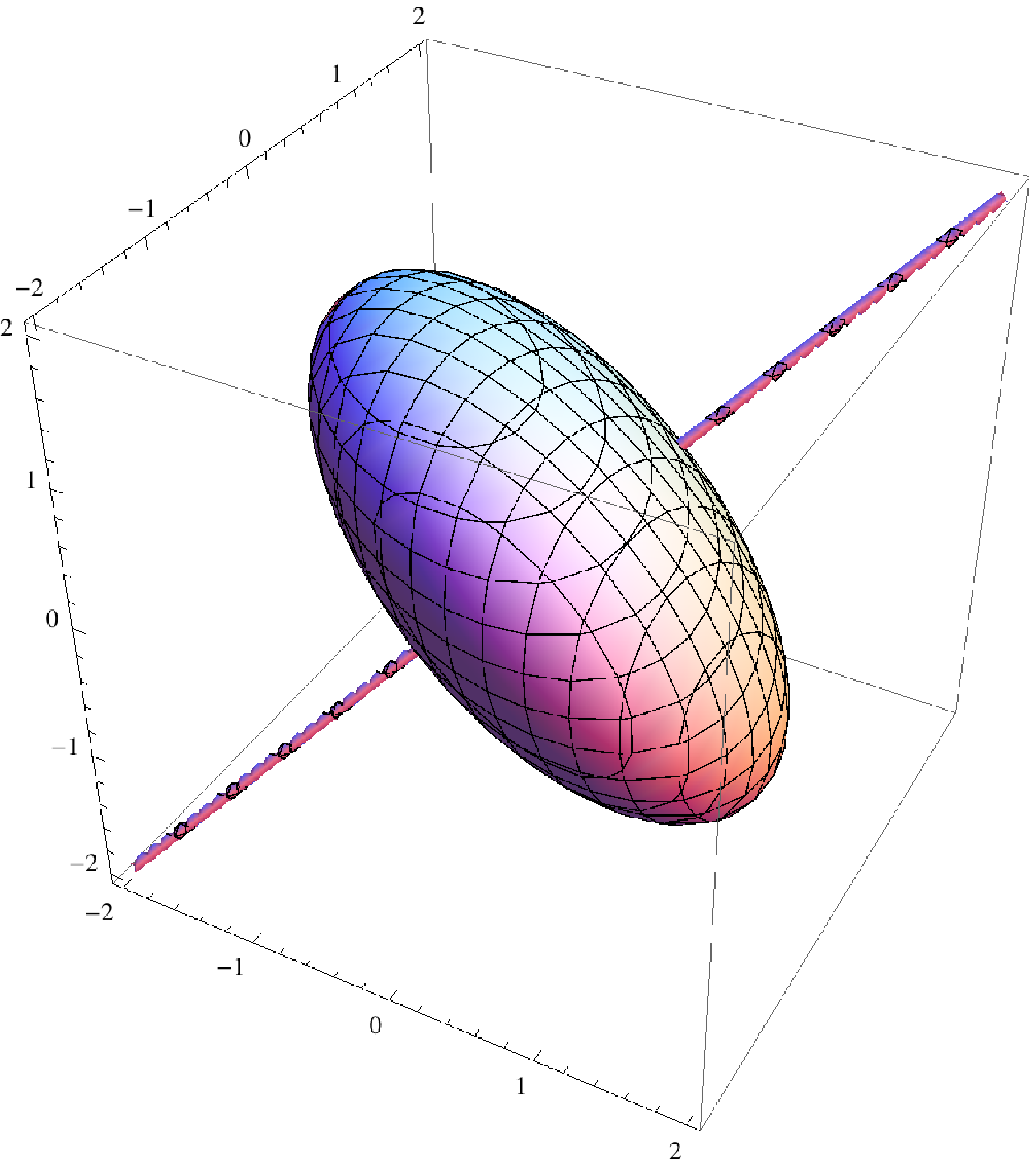}}\
  \subfloat[$V_{k\to\infty}\sum_{j=1}^3 x_j^2=1$]{\includegraphics[width=0.3\textwidth]{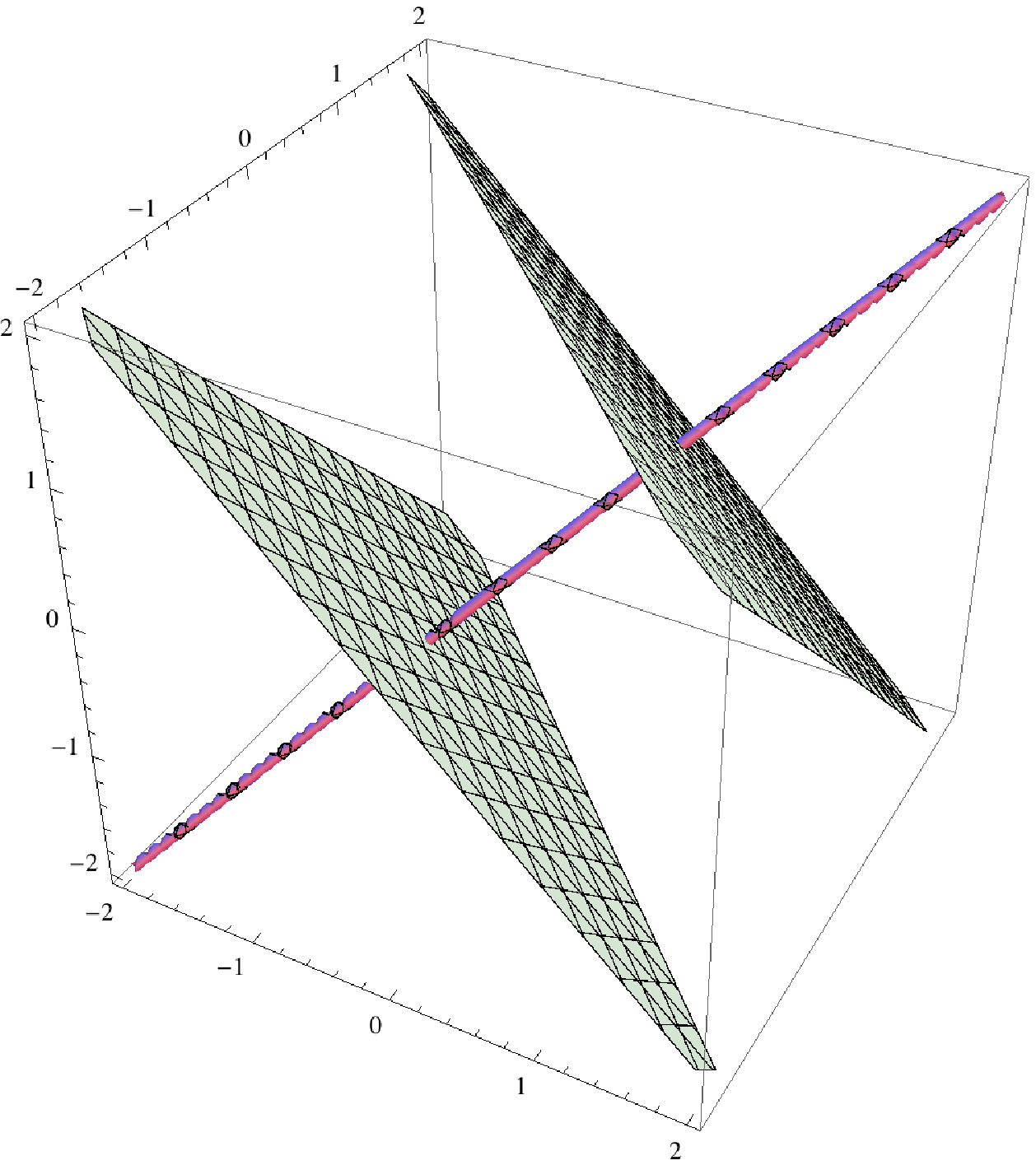}}
  \\
 
  \subfloat[$V_{k=0}\sum_{1\leq i<j\leq 3} x_ix_j=1$]{\includegraphics[width=0.3\textwidth]{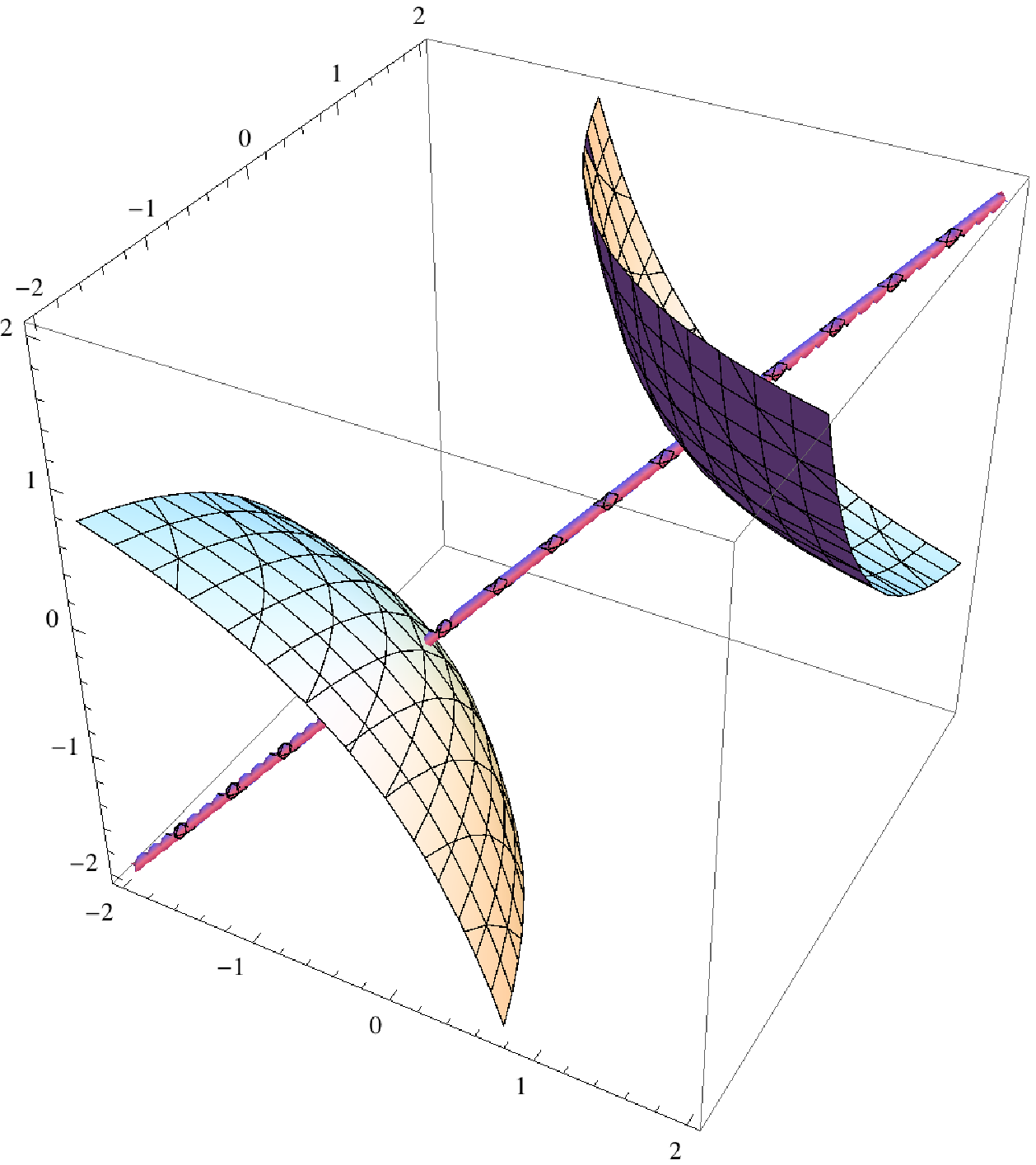}}\
  \subfloat[$V_{k=1}\sum_{1\leq i<j\leq 3} x_ix_j=1$]{\includegraphics[width=0.3\textwidth]{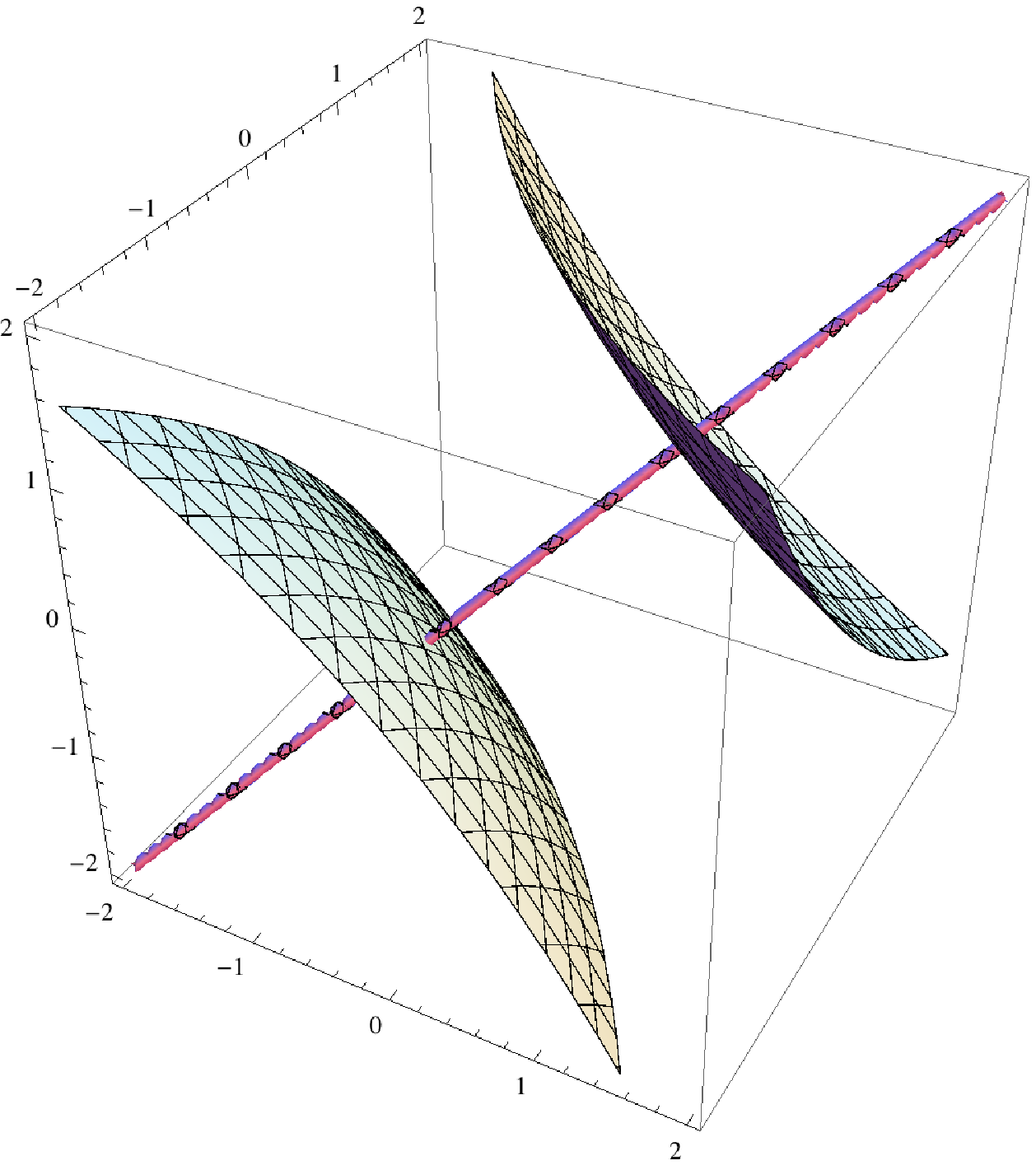}}\
  \subfloat[$V_{k\to\infty}\sum_{1\leq i<j\leq 3} x_ix_j=1$]{\includegraphics[width=0.3\textwidth]{Fig1c.eps}}
  \caption{Effect of the intertwining operator on the equations $m_2(\bx)=1$ and $m_{11}(\bx)=1$.}
  \label{quadratick1}
\end{figure}
If we take the limit $k\to\infty$, the two quadratic equations become
\numparts
\begin{eqnarray}
V_k\sum_{j=1}^N x_j^2&\stackrel{k\to\infty}{\longrightarrow}&\frac{1}{N}\left(\sum_{j=1}^N x_j\right)^2=1,\\
V_k\sum_{1\leq i<j\leq N} x_ix_j&\stackrel{k\to\infty}{\longrightarrow}&\frac{N-1}{2N}\left(\sum_{j=1}^N x_j\right)^2=1.\label{degree2kinf}
\end{eqnarray}
\endnumparts
These equations represent two pairs of planes, namely
\begin{equation}
\sum_{j=1}^N x_j=\pm\sqrt N\quad\textrm{and}\quad\sum_{j=1}^N x_j=\pm\sqrt{\frac{2N}{N-1}}.
\end{equation}

We plot \eref{degree2a} and \eref{degree2b} for the particular case $N=3$ in Figure~\ref{quadratick1}. The line defined by the vector $(1,1,1)$ is represented by the diagonal line in each of the subfigures. From the figure, we can see that as the value of $k$ increases, the surfaces get stretched equally in all directions orthogonal to the line. In the limit $k\to\infty$, the surfaces become planes, and there are two of them because we have chosen quadratic polynomials, i.e. $|\lambda|=2$. 

From these observations, we expect that the effect of $V_{k\to\infty}$ on the monomial symmetric functions $m_\lambda$ is to reduce them to the $|\lambda|$-th power of $e_1(\bx)=\sum_{j=1}^Nx_j$. The power must be $|\lambda|$ because $V_k$ conserves the degree of all homogeneous polynomials by definition. We prove this fact in the following theorem.

\subsection{The $k\to\infty$ limit of Theorem~\ref{maintheorem}}\label{inflimit}

\begin{theorem}\label{conjecture2}
When $k\to\infty$, \eref{mainresult} becomes
\begin{equation}
\lim_{k\to\infty}V_k m_\lambda(\bx)=\frac{M(\lambda,N)}{N^{|\lambda|}}\left(\sum_{j=1}^Nx_j\right)^{|\lambda|}.
\end{equation}
\end{theorem}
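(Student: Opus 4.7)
My plan is to pass to the limit $k\to\infty$ inside the expansion \eref{mainresult} of Theorem~\ref{maintheorem} and show that only the single-row partition $\tau=(|\lambda|)$ survives. Setting $\alpha=1/k$ and using the standard cell-product formulas $c_\tau(\alpha)=\prod_{s\in\tau}(\alpha a(s)+l(s)+1)$ and $c_\tau^\prime(\alpha)=\prod_{s\in\tau}(\alpha(a(s)+1)+l(s))$, where $a(s),l(s)$ denote the arm and leg lengths of $s\in\tau$, I observe that precisely $\tau_1$ cells of $\tau$ (the bottom cell of each column) have $l(s)=0$, and these contribute factors of order $k^{-1}$ to $c_\tau^\prime(1/k)$. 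Hence $c_\tau(1/k)/c_\tau^\prime(1/k)=\Theta(k^{\tau_1})$. Meanwhile the generalized Pochhammer symbol $(kN)_\tau^{(1/k)}=\prod_{(i,j)\in\tau}[k(N-i+1)+j-1]$ is $\Theta(k^{|\tau|})$, so the full prefactor $c_\tau(1/k)/[c_\tau^\prime(1/k)(kN)_\tau^{(1/k)}]$ in \eref{mainresult} scales as $k^{\tau_1-|\tau|}$, which vanishes whenever $l(\tau)\geq 2$.

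Provided the remaining factors $u_{\tau\lambda}(1/k)$ and $\PP{\tau}{1/k}(\bx)$ stay bounded as $k\to\infty$ — which follows because the coefficients of Jack polynomials in the monomial basis are rational functions of $\alpha$ regular at $\alpha=0$ — only the term $\tau=(n)$ with $n=|\lambda|$ survives in the limit. For that term I would compute each piece explicitly: the single-row cells satisfy $a(1,j)=n-j$ and $l(1,j)=0$, giving $c_{(n)}(1/k)\to 1$, $c_{(n)}^\prime(1/k)=n!/k^n$, and $(kN)_{(n)}^{(1/k)}\sim(kN)^n$, so that
\[
\lim_{k\to\infty}\frac{c_{(n)}(1/k)}{c_{(n)}^\prime(1/k)(kN)_{(n)}^{(1/k)}}=\frac{1}{n!\,N^n}.
\]
The classical limit of Jack polynomials at $\alpha=0$, $\PP{\tau}{1/k}\to e_{\tau^\prime}$ (Macdonald, Stanley), specialises to $\PP{(n)}{1/k}(\bx)\to e_1(\bx)^n=(\sum_j x_j)^n$, and the multinomial theorem identifies the coefficient $u_{(n),\lambda}(0)=n!/\lambda!$ of $m_\lambda$ in this polynomial. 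Substituting these limits into \eref{mainresult} yields
\[
\lim_{k\to\infty}V_k m_\lambda(\bx)=\lambda!\,M(\lambda,N)\cdot\frac{1}{n!\,N^n}\cdot\frac{n!}{\lambda!}\cdot\left(\sum_{j=1}^N x_j\right)^n=\frac{M(\lambda,N)}{N^{|\lambda|}}\left(\sum_{j=1}^N x_j\right)^{|\lambda|},
\]
as claimed.

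The hard part will be the uniform control of Jack polynomials at $\alpha=0$ needed to make the dominance argument rigorous, namely the statements $\PP{\tau}{1/k}\to e_{\tau^\prime}$ and the finiteness of the expansion coefficients $u_{\tau\lambda}(1/k)$ for all $\tau$. Both are classical results from Jack polynomial theory, but the proof should either cite Macdonald's monograph or verify them directly by observing that the $\alpha$-dependence of the defining eigenvalue equation degenerates in a controlled fashion at $\alpha=0$, with $e_{\tau^\prime}$ emerging as the unique monic eigenfunction whose leading monomial is $m_\tau$.
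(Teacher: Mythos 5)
Your proposal is correct and follows essentially the same route as the paper: an asymptotic analysis of the coefficient $c_\tau(1/k)/[c_\tau^\prime(1/k)(kN)_\tau^{(1/k)}]$ showing that only the single-row partition $\tau=(|\lambda|)$ survives (your arm/leg bookkeeping is just the paper's cell-by-cell product written with $a(i,j)=\tau_i-j$, $l(i,j)=\tau_j^\prime-i$), followed by the $\alpha\to0$ specialization $\PP{\tau}{1/k}\to e_{\tau^\prime}$ and the multinomial identification of the coefficient $|\lambda|!/\lambda!$. The regularity of the Jack expansion at $\alpha=0$ that you flag as the delicate step is exactly what the paper imports from Macdonald via its Table of specializations, so citing that reference suffices.
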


\begin{proof}
The quantity
\[\frac{c_\tau(1/k)}{c_\tau^\prime (1/k)(kN)_\tau^{(1/k)}}=\prod_{(i,j)\in \tau}\frac{(\tau_i-j+k(\tau_j^\prime-i+1))}{(k(N-i+1)+j-1)(\tau_i-j+1+k(\tau_j^\prime-i))}\]
tends to zero as $k\to\infty$ in general. The only case in which this quantity does not vanish is when $\tau_j^\prime=i$ for any $(i,j)\in\tau$. Now, $\tau^\prime_j$ represents the number of parts greater than or equal to $j$ in $\tau$ (see \ref{MVSFns}), so we can conclude that this condition is satisfied only when $\tau_j^\prime=1$, i.e. $\tau=(\tau_1,0,\ldots)$. In other words, only partitions of length equal to one satisfy this condition. Therefore,
\[\frac{c_\tau(1/k)}{c_\tau^\prime (1/k)(kN)_\tau^{(1/k)}}\stackrel{k\to\infty}{\longrightarrow}0\]
whenever $l(\tau)>1$ and
\begin{eqnarray*}
\frac{c_\tau(1/k)}{c_\tau^\prime (1/k)(kN)_\tau^{(1/k)}}&=&\prod_{j=1}^{\tau_1}\frac{(\tau_1-j+k)}{(kN+j-1)(\tau_1-j+1)}\\
&&\stackrel{k\to\infty}{\longrightarrow}\prod_{j=1}^{\tau_1}\frac{1}{N(\tau_1-j+1)}=\frac{1}{N^{\tau_1}\tau_1!}
\end{eqnarray*}
when $l(\tau)=1$. Then, $V_k$ becomes
\begin{equation}
V_k m_\lambda(\bx)\stackrel{k\to\infty}{\longrightarrow}\frac{\lambda!M(\lambda,N)}{N^{|\lambda|}|\lambda|!}a_{\lambda^*\lambda}(e_{1}(\bx))^{|\lambda|},\nonumber
\end{equation}
where $\lambda^*=(|\lambda|,0,\ldots)$. This is due to the fact that if $l(\tau)=1$, then $\tau^\prime=(1,1,\ldots,1)$, a partition composed of ones with $|\lambda|$ parts (see \eref{edefinition} and Table~\ref{Jacktable} for the definitions of $e_{1}(\bx)$ and $a_{\tau\lambda}$). We finish our calculation giving an explicit form of the matrix components $a_{\lambda^*\lambda}$ by expanding $(e_{1}(\bx))^{|\lambda|}$ in terms of $m_\tau(\bx)$. We write
\[(e_{1}(\bx))^{|\lambda|}=\left(\sum_{j=1}^Nx_j\right)^{|\lambda|}=\sum_{\substack{\tau:l(\tau)\leq N\cr |\tau|=|\lambda|}}\frac{|\lambda|!}{\tau!}m_\tau(\bx)=\sum_{\substack{\tau:l(\tau)\leq N\cr |\tau|=|\lambda|}}a_{\lambda^*\tau}m_\tau(\bx),\]
and deduce that
\begin{equation}
a_{\lambda^*\lambda}=\frac{|\lambda|!}{\lambda!},
\end{equation}
which proves the statement.
\end{proof}

Note that the equation
\begin{equation}
\frac{M(\lambda,N)}{N^{|\lambda|}}\left(\sum_{j=1}^Nx_j\right)^{|\lambda|}=1
\end{equation}
clearly represents one (for $|\lambda|$ odd) or two (for $|\lambda|$ even) planes with normal vector equal to
\begin{equation}
\bone=(1,\ldots,1).\label{vectorone}
\end{equation}
This is consistent with Figures~\ref{quadratick1}(c) and (f).

\section{The Freezing Regime}\label{main3}

The physical implications of the observations in the previous section involve the behaviour of the symmetric Dunkl process when $k$ tends to infinity. This limit is equivalent to the strong coupling limit ($\beta\to\infty$) of Dyson's model. Although the observations in the previous section suggest that the limit $\lim_{k\to\infty}\sum_{\rho\in S_N}E_k(\rho\bx,\by)$ exists, one must be careful when taking this limit in \eref{TPDDyson1}. In particular, the fact that $k=\beta/2$ is the coupling constant of the interaction between particles in Dyson's model implies that this interaction separates the particles infinitely fast when $k\to\infty$. In order to obtain an intuitive picture of what occurs, we consider the Kolmogorov forward equation associated to \eref{DysonBKE} with the parameter $k$ in accordance with Theorem~\ref{symmetricdunkllemma}:
\begin{eqnarray}
\fl\frac{\partial}{\partial t}p_k^s(t,\by|\bx)=&&\frac{1}{2}\Delta^{(y)}p_k^s(t,\by|\bx)\nonumber\\
&&\qquad-\sum_{i=1}^N\sum_{\substack{j=1\cr j\neq i}}^N\frac{k}{y_i-y_j}\frac{\partial}{\partial y_i}p_k^s(t,\by|\bx)+k\sum_{i=1}^N\sum_{\substack{j=1\cr j\neq i}}^N\frac{p_k^s(t,\by|\bx)}{(y_i-y_j)^2}\label{SymFPE}.
\end{eqnarray}
We will now apply the scaling $\by\to\sqrt k \bv$. Under this scaling, \eref{SymFPE} becomes
\begin{eqnarray}
\fl\frac{\partial}{\partial t}p_k^s(t,\sqrt k \bv|\bx)=&&\frac{1}{2k}\Delta^{(v)}p_k^s(t,\sqrt k \bv|\bx)\nonumber\\
&&\quad-\sum_{i=1}^N\sum_{\substack{j=1\cr j\neq i}}^N\frac{1}{v_i-v_j}\frac{\partial}{\partial v_i}p_k^s(t,\sqrt k \bv|\bx)+\sum_{i=1}^N\sum_{\substack{j=1\cr j\neq i}}^N\frac{p_k^s(t,\sqrt k \bv|\bx)}{(v_i-v_j)^2}\label{ScaledFPE}.
\end{eqnarray}
Furthermore, if we assume that $\lim_{k\to\infty}p_k^s(t,\sqrt k \bv|\bx)k^{N/2}=p_\infty^s(t,\bv|\bx)$, taking the limit $k\to\infty$ transforms \eref{ScaledFPE} into
\begin{equation}
\frac{\partial}{\partial t}p_\infty^s(t,\bv|\bx)=-\sum_{i=1}^N\frac{\partial}{\partial v_i}\left[\sum_{\substack{j=1\cr j\neq i}}^N\frac{p_\infty^s(t,\bv|\bx)}{v_i-v_j}\right],\label{diffusionless}
\end{equation}
a Kolmogorov forward equation with a drift term but without a diffusion term. This equation suggests that the scaled process on $\bv$ is actually deterministic. Physically speaking, this is a natural consequence of taking the $k\to\infty$ limit, since the randomness of Dyson's model is removed in the freezing regime. Therefore, the argument above implies that $p_k^s(t,\sqrt k \bv|\bx)k^{N/2}$ should converge to a function of the form
\begin{equation}
\lim_{k\to\infty}p_k^s(t,\sqrt k \bv|\bx)k^{N/2}=\delta^N[\bv-\boldsymbol{F}(t,\bx)],
\end{equation}
where $\boldsymbol{F}(t,\bx)$ denotes the vector solution of a deterministic equation of motion.

Interestingly, the evolution of the scaled process considered above is related to the Hermite polynomials defined by (see e.g. \cite{arfken})
\begin{equation}
H_N(x)=(-1)^N \rme^{x^2}\frac{\ud^N}{\ud x^N}(\rme^{-x^2}).\label{hermitep}
\end{equation}
More specifically, the process described by \eref{diffusionless} depends on the $N$ roots of $H_N(x)$, here denoted by 
\begin{equation}
\bz_N=(z_{1,N},\ldots,z_{N,N}),\label{zroots}
\end{equation}
and their permutations $\rho \bz_N$, $\rho\in S_N$, where $z_{i,N}$ is the $i$-th root and $z_{i,N}<z_{j,N}$ for $i<j$. These $N$ roots are known to be real \cite{szego}. Also, because $H_N(x)$ obeys the relation
\begin{equation}
H_N(-x)=(-1)^NH_N(x),
\end{equation}
its roots add up to zero, i.e.
\begin{equation}
\sum_{j=1}^Nz_{j,N}=0.
\end{equation}

We address the freezing regime in the following theorem.

\begin{theorem}\label{lasttheorem}
The $k\to\infty$ limit of the TPD $p_k^s(t,\sqrt k \bv|\bx)k^{N/2}$ is independent of $\bx$ and is given by
\begin{equation}
p_\infty^s(t,\bv|\bx)=p_\infty^s(t,\bv)=\sum_{\rho\in S_N} \delta^N[\bv-\sqrt{2t}\rho\bz_N].
\end{equation}
\end{theorem}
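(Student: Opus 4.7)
My plan is to analyse the concentration of the rescaled symmetric TPD $k^{N/2} p_k^s(t,\sqrt{k}\bv|\bx)$ as $k\to\infty$ by a Laplace-type saddle-point argument on the explicit formula \eref{TPDDyson1} (which is applicable via Theorem~\ref{symmetricdunkllemma} with $\beta=2k$), and then to pin down the residual weight at each saddle by a short symmetry-plus-normalization argument that avoids any sharp asymptotic evaluation of the $\FF{1/k}$ factor.

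First I would substitute $\by=\sqrt{k}\bv$ and $\beta=2k$ into \eref{TPDDyson1}. Using the homogeneity identity $|h_N(\sqrt{k}\bv/\sqrt{t})|^{2k}=(k/t)^{kN(N-1)/2}|h_N(\bv)|^{2k}$ and Stirling's asymptotics for $\prod_j[\Gamma(1+k)/\Gamma(1+jk)]$, one can factorise the density as $C_k(\bx,t,N)\,\exp[k\,\Phi_t(\bv)]\,\FF{1/k}(\bx/\sqrt{t},\sqrt{k}\bv/\sqrt{t})$, where $C_k$ does not depend on $\bv$ and the effective potential is
\[
\Phi_t(\bv) = -\frac{v^2}{2t} + 2\log|h_N(\bv)|.
\]
The critical-point equation $\nabla\Phi_t=\bzero$ reduces to $v_i=2t\sum_{j\neq i}(v_i-v_j)^{-1}$; setting $v_i=\sqrt{2t}\,z_i$ turns it into $z_i=\sum_{j\neq i}(z_i-z_j)^{-1}$, which is exactly the equation satisfied by the roots $\bz_N$ of $H_N$, obtainable from the Hermite ODE $H_N''-2xH_N'+2NH_N=0$ together with the factorisation $H_N(x)=2^N\prod_j(x-z_{j,N})$. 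Hence the full critical set is the $S_N$-orbit $\{\sqrt{2t}\,\rho\bz_N:\rho\in S_N\}$, and each element is a non-degenerate global maximum because $\Phi_t\to-\infty$ both on the collision walls $v_i=v_j$ and as $|\bv|\to\infty$.

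Next I would apply Laplace's method to show that $k^{N/2}p_k^s(t,\sqrt{k}\bv|\bx)$ concentrates on these $N!$ saddles as $k\to\infty$. The main worry is the $k$-dependent factor $\FF{1/k}(\bx/\sqrt{t},\sqrt{k}\bv/\sqrt{t})$, but the standard estimate $|E_k(\bx,\by)|\le\max_\rho e^{\rho\bx\cdot\by}$ (valid for real arguments) together with $\FF{1/k}=(1/N!)\sum_\rho E_k(\rho\,\cdot,\cdot)$ gives the bound $\FF{1/k}(\bx/\sqrt{t},\sqrt{k}\bv/\sqrt{t})=O(e^{\sqrt{k}|\bx||\bv|/t})$, which is $\exp(o(k))$ and hence subdominant to $e^{k\Phi_t}$ on any bounded set; the same bound also yields tightness when combined with the Gaussian decay of $\Phi_t$ at infinity. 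To identify the residual weight at each saddle without having to compute $\FF{1/k}$ sharply, I would invoke two symmetry/normalization facts: (i) \eref{TPDDyson1} is manifestly invariant under permutations of $\by$, and therefore of $\bv$, so any weak limit must assign the same mass to every saddle in the single orbit $\{\sqrt{2t}\rho\bz_N\}$; and (ii) $\int k^{N/2}p_k^s(t,\sqrt{k}\bv|\bx)\,d^N v = N!$ for every $k$ and $\bx$, and by tightness the total mass of the limit is again $N!$. Combining (i) and (ii) forces every saddle to carry mass exactly $1$, which is automatically independent of $\bx$, yielding
\[
p_\infty^s(t,\bv|\bx) = \sum_{\rho\in S_N}\delta^N[\bv-\sqrt{2t}\rho\bz_N].
\]

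The hardest step will be upgrading this Laplace sketch to a rigorous weak-limit statement: in particular, one needs a uniform-in-$k$ upper bound and enough equicontinuity for $\FF{1/k}(\bx/\sqrt{t},\sqrt{k}\bv/\sqrt{t})$ on a small neighbourhood of each saddle, so that the Laplace Gaussian window integrates it to exactly the prefactor dictated by the Stirling asymptotics of $C_k$. These bounds can be derived either from R\"osler's integral representation of $V_k$ or from the Jack-function expansion behind \eref{mainresult}; Theorem~\ref{conjecture2} makes it transparent why the $\bx$-dependence effectively drops out at the saddles, since $e_1(\bz_N)=\sum_j z_{j,N}=0$ and the only surviving partitions in the $k\to\infty$ asymptotic expansion have $l(\tau)=1$, but making the argument sufficiently uniform to conclude convergence in the sense of distributions is the delicate technical point.
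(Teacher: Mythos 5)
Your proposal is correct in outline and reaches the result by a route that differs from the paper's in one essential component. The paper does not merely bound the kernel factor: it uses Theorems~\ref{maintheorem} and \ref{conjecture2} to compute the exact $k\to\infty$ limit of the symmetrized Dunkl kernel, $\lim_{k\to\infty}\sum_{\rho}E_k(\rho\bx,\by)=N!\exp\bigl((\bx\cdot\bone)(\by\cdot\bone)/N\bigr)$, so that after the scaling $\by\to\sqrt{k}\bv$ the $\bx$-dependence enters only through the term $\frac{\sqrt{k}}{N}(\bx\cdot\bone)(\bv\cdot\bone)$, which vanishes at the saddles because $\sum_j z_{j,N}=0$; this, together with the appendix lemmas that the maxima of $F_N(\bv,t)$ sit at $\sqrt{2t}\rho\bz_N$ \emph{and that the maximum value is exactly zero} (a computation requiring the discriminant of $H_N$), leaves $\frac{N}{2}\log k$ as the only divergent term and explains structurally why the limit forgets $\bx$. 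You instead treat the ${}_0\mathcal{F}_0$ factor as an $e^{O(\sqrt{k})}$ nuisance via the bound $|E_k(\bx,\by)|\le\max_\rho e^{\rho\bx\cdot\by}$ and fix the weights at the saddles by permutation symmetry plus the conserved total mass $N!$; this buys you a shortcut that bypasses both the sharp kernel limit and the max-value (Hermite discriminant) lemma, at the price of a purely indirect explanation of the $\bx$-independence and of concentration/tightness estimates that you only sketch. Your critical-point analysis (Hermite ODE plus the factorization of $H_N$) coincides with the paper's Lemma in \ref{extrema}, and your final normalization step mirrors the paper's. One technical point to make your concentration argument airtight: to show that the mass on a set away from the saddle orbit is negligible you need not only the upper bound $e^{O(\sqrt{k})}$ on the hypergeometric factor but also a matching lower bound near the saddles (obtainable, e.g., from the positivity of $V_k$, which gives $E_k(\bx,\by)\ge e^{\min_\rho\rho\bx\cdot\by}$), so that the ratio of densities decays like $e^{-ck}$; with that supplied, your symmetry-plus-normalization identification of the limit as $\sum_{\rho\in S_N}\delta^N[\bv-\sqrt{2t}\rho\bz_N]$ is sound, and is in fact no less rigorous than the paper's own Stirling-based argument.
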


\begin{proof}
We first calculate the limit
\begin{equation}
\lim_{k\to\infty}\sum_{\rho\in S_N}E_k(\rho\bx,\by).
\end{equation}
Using Theorems~\ref{maintheorem} and \ref{conjecture2} and the expansion in \eref{SymDKExp}, we obtain
\begin{eqnarray}
\lim_{k\to\infty}\sum_{\rho\in S_N}E_{k}(\rho\bx,\by)&=&\sum_{\mu:l(\mu)\leq N}\frac{N!m_\mu(\by)}{\mu!M(\mu,N)}\lim_{k\to\infty}V_{k}m_\mu(\bx)\nonumber\\
&=&\sum_{\mu:l(\mu)\leq N}\frac{N!m_\mu(\by)}{\mu!N^{|\mu|}}\left(\sum_{j=1}^Nx_j\right)^{|\mu|}\nonumber\\
&=&\sum_{\mu:l(\mu)\leq N}\frac{N!}{\mu!}m_\mu\left(\frac{(\bx\cdot\bone)\by}{N}\right)\nonumber\\
&=&N!\exp\left(\frac{(\bx\cdot\bone)(\by\cdot\bone)}{N}\right)\label{infsymDK}.
\end{eqnarray}

We now turn our attention to $p_k^s(t,\by|\bx)$. The probability that the symmetric Dunkl process reaches an infinitesimal volume $\ud ^N y$ around $\by$ from $\bx$ after a time $t$ is $p_k^s(t,\by|\bx)\ud ^N y$. This probability, after the scaling $\by\to\sqrt k \bv$ equals
\begin{eqnarray}
p_k^s(t,\by|\bx)\ud ^N y&=&p_k^s(t,\sqrt k \bv|\bx)k^{N/2}\ud ^N v\nonumber\\
&=&\exp \big\{ \log \big(p_k^s(t,\sqrt k \bv|\bx)k^{N/2}\big)\big\} \ud ^N v.\label{scaledprobdens}
\end{eqnarray}
When $k$ is sufficiently large one can use Stirling's approximation \cite{feller} combined with \eref{pks}, \eref{TPDkA} and \eref{infsymDK} to approximate the probability density in \eref{scaledprobdens} as the exponential of
\begin{eqnarray}
&&k\left[\frac{N}{2}(N-1)(1-\log t)-\sum_{j=1}^Nj\log j+2\log|h_N(\bv)|-\frac{v^2}{2t}\right]\nonumber\\
&&\ +\frac{\sqrt k}{N}(\bx\cdot\bone)(\bv\cdot\bone)+\frac{N}{2}\log k+\log(N!)-\frac{N}{2}\log(2\pi t)-\frac{x^2}{2t}.\label{extreme}
\end{eqnarray}
Let us denote the term in the square brackets by $F_N(\bv,t)$. According to this expression, in the limit $k\to\infty$ the probability density goes to infinity whenever $F_N(\bv,t)$ is positive, and vanishes when it is negative. However, this distinction is difficult to make due to the presence of the Vandermonde determinant of $\bv$ in $F_N(\bv,t)$. We show in \ref{extrema} that $F_N(\bv,t)$ is negative except in the points
\begin{equation}
\bv=\sqrt{2t} \rho \bz_N.
\end{equation}
The consequence of this is that the probability density \eref{scaledprobdens} vanishes at $\bv\neq\sqrt{2t}\bz_N$ or its permutations. On the other hand, when $\bv=\sqrt{2t}\bz_N$ or its permutations, $kF_N(\bv,t)$ vanishes (see \ref{extrema}) along with the term proportional to $\sqrt{k}$, because the roots of $H_N(x)$ add up to zero. That is,
\begin{equation}
\frac{\sqrt k}{N}(\bx\cdot\bone)(\sqrt{2t}\bz_N\cdot\bone)=\frac{\sqrt{2kt}}{N}(\bx\cdot\bone)\sum_{i=1}^Nz_{i,N}=0.
\end{equation}
The only $k$-dependent term that remains is $\frac{N}{2}\log(k)$, which tends to infinity with $k$. Therefore, we can write
\begin{equation}
\lim_{k\to\infty}p_k^s(t,\sqrt k \bv|\bx)k^{N/2}\ud ^N v \propto \sum_{\rho\in S_N}\delta[\bv-\sqrt{2t}\rho\bz_N].
\end{equation}
However, we know that $p_k^s(t,\sqrt k \bv|\bx)k^{N/2}\ud ^N v$ is equal to $N!$ when integrated over $\RR^N$, for any value of $k$. Similarly, the integral over $\RR^N$ of the sum of delta functions on the RHS is also equal to $N!$. Therefore, the two quantities above are equal.
\end{proof}

Note that the effect of $\lim_{k\to\infty}\sum_{\rho\in S_N}E_{k}(\rho\bx,\by)$ on the final result is that it makes the TPD $p_\infty^s(t,\bv|\bx)$ independent of $\bx$. When $F_N(\bv,t)$ attains its maximum value, the symmetric Dunkl kernel is equal to one for all values of $\bx$, which leaves $k^{N/2}$ as the dominating term in the limit. Therefore, the form of the intertwining operator and the properties of the Hermite polynomials are ultimately responsible for removing the dependence of $p_k^s$ on $\bx$ in the freezing regime.


\section{Concluding Remarks}\label{conclusions}

We established the equivalence between Dyson's model and symmetric Dunkl processes under the parameter relation \eref{parameterequation} in Theorem~\ref{symmetricdunkllemma}. This correspondence yields a physical interpretation of the abstract parameter $k$, i.e. one can regard $k$ as a coupling constant or a parameter proportional to the inverse temperature. We have also extracted the effect of the intertwining operator on symmetric polynomials in Theorem~\ref{maintheorem}. It is of interest to note that the functions involved, which are commonly found in combinatorics and representation theory \cite{macdonald, fulton} and in the Calogero-Moser systems \cite{hikamik96, lapointevinet96, dunkl02}, are found here through the comparison of two multivariate stochastic processes. This result stresses that symmetric polynomials play an important role in the setting of stochastic processes. While it is known that Schur functions are essential in the study of the non-colliding Brownian motion ($\beta/2=k=1$) \cite{katoritanemura02, katoritanemura04, guttmann98, krattenthaler00}, their role in the general $k$ case is played by Jack functions \cite{bakerforrester97, forrester10}, which makes them equally essential for the study of Dyson's model. 

We showed in Theorem~\ref{conjecture2} that the limit $k\to\infty$ of $V_k$ for symmetric polynomials exists and that it has a simple form. This fact allowed us to calculate the strong coupling limit of the symmetrized Dunkl kernel, and in turn obtain the TPD of the symmetric Dunkl process, which is the result of Theorem~\ref{lasttheorem}. The positions of the Brownian particles are locked to a deterministic path, which is a natural consequence of the small scale of the Brownian vibrations in this regime. Moreover, we pointed out two non-trivial properties of the symmetric Dunkl process in the freezing regime. The first of them is that the final configuration of the Brownian particles depends on the roots of the Hermite polynomials rather than being an equally-spaced or simpler configuration. This is a consequence of the form of the interaction between particles: the Vandermonde determinant of $\bv$ appears in $p_k^s(t,\sqrt k \bv|\bx)$ as a death factor that makes the TPD vanish whenever $v_i=v_j,\ i\neq j$, pushing the particles away from each other. Furthermore, the partial derivatives of the Vandermonde determinant are responsible for the form of the drift term in \eref{DysonBKE} and \eref{BKEqNDWeyl}. Therefore, a different interaction would lead to a different form for the function $F_N(\bv,t)$, which would in turn have its maxima in points different from $\bz_N$ or its permutations. The second property is that $\lim_{k\to\infty}p_k^s(t,\sqrt k \bv|\bx)k^{N/2}$ is independent of the initial configuration $\bx$. Because the process is deterministic in this limit, one expects it to depend on the initial conditions from which it has evolved. Contrary to this expectation, the nature of the intertwining operator at $k\to\infty$ and the symmetry that the roots of $H_N(x)$ obey take away that dependence.

We find that there is a similarity between the strong coupling limits of the symmetric Dunkl processes and the Calogero-Moser system described by \eref{SCMsystem}. When $k\to\infty$, the particles freeze at the minima of the potential in the leading ($k^2$) term
\begin{equation}
\frac{1}{2}\sum_{i=1}^{N}x_i^2+\sum_{1\leq i<j\leq N}\frac{1}{(x_i-x_j)^2},
\end{equation}
which are known to be the roots of $H_N(x)$, $\bz_N$ \cite{frahm93}. As the particles freeze, the kinetic term becomes negligible, and the only remaining variable term in \eref{SCMsystem} is the term of order $k$. This is called the ``freezing trick'' \cite{barbafinkel08}. What remains is known as the Polychronakos-Frahm (PF) spin chain Hamiltonian \cite{polychronakos93,frahm93}, and it is given by 
\begin{equation}\label{PFsc}
\mathcal{H}_{\textrm{PF}}=\sum_{1\leq i<j\leq N}\frac{\sigma_{ij}}{(z_{i,N}-z_{j,N})^2}.
\end{equation}
In this case as well as in the symmetric Dunkl processes, the particles freeze in positions related to the roots of the Hermite polynomials. This fact seems to stem from the mathematical structure of the two models, because the expressions that define them, equations \eref{vkheateq} and \eref{transformedhamiltonian}, depend on the Dunkl operators $T_i$. We suspect that there must be an underlying physical reason for this similarity, but this is a conjecture that we leave for future study.


While Theorem~\ref{maintheorem} allowed us to carry out the analysis of the strong coupling limit of symmetric Dunkl processes, the expression in \eref{mainresult} should yield more information if it is simplified, e.g. by considering the eigenfunctions and eigenvalues of $V_k$. For instance, if we define the quantities
\begin{eqnarray}
B_\tau(k)=\sqrt{\frac{c_\tau (1/k)}{c_\tau^\prime (1/k)(kN)_\tau^{(1/k)}}}&\quad\textrm{and}\quad&\QQ{\tau}{k}(\bx)=B_\tau(k)\PP{\tau}{1/k}(\bx),
\end{eqnarray}
we can rewrite \eref{mainresult} in two ways in matrix form as
\begin{eqnarray}
V_k \bmm = \bD \bA \bA^T \bmm&\quad\textrm{and}\quad&V_k \bQ = \bA^T \bD \bA \bQ,
\end{eqnarray}
where $(\bmm)_\lambda=m_\lambda(\bx)$ and $(\bQ)_\tau=\QQ{\tau}{k}(\bx)$ are vectors with polynomial components, and the matrices that relate them are given by $(\bA)_{\mu\tau}=u_{\tau\mu}(1/k)B_\tau(k)$ and $(\bD)_{\lambda\mu}=\lambda!M(\lambda,N)\delta_{\lambda\mu}$. With these expressions, it should be possible to find a more revealing form for $V_k$.

The non-symmetric case of type-$A$ Dunkl processes is of interest due to the richness brought on by the spontaneous jump term (the third term on the RHS of \eref{BKEqNDWeyl}). In analogy with \eref{mainresult}, we expect to be able to express $V_k$ in terms of non-symmetric Jack functions \cite{forrester10}. Brute-force calculations in particular cases yield
\begin{equation}
V_kx_i=\frac{x_i+k\bx\cdot\bone}{1+Nk}
\end{equation}
in the linear case for general $N$ and
\numparts
\begin{eqnarray}
V_k x_i^2&=&\frac{2x_i^2+k(\bx\cdot\bone)^2}{2(1+2k)}\quad(N=2),\\
V_k x_i^2&=&\frac{2x_i(x_i+k\bx\cdot\bone)+k(x^2+k(\bx\cdot\bone)^2)}{(2+3k)(1+3k)}\quad(N=3)
\end{eqnarray}
\endnumparts
for the quadratic case. All these equations point toward a relation of the form
\begin{equation}
\lim_{k\to\infty} V_k \bx^\tau\propto (\bx\cdot\bone)^{|\tau|}.
\end{equation}
If it is correct, the above relationship could yield a Dunkl kernel of a form similar to \eref{infsymDK} in the strong coupling limit. These two statements for the non-symmetric case remain to be tested.

Dumitriu and Edelman \cite{DumitriuEdelman05} have carried out a freezing regime analysis of the eigenvalues of the tridiagonal random matrix $\beta$-ensembles. Their study includes not only the eigenvalue behaviour at $\beta\to\infty$, but also the first-order corrections at large finite values of $\beta$ (Theorem 3.1). It can be shown, using \eref{mainresult} combined with \eref{TPDkA}, that the statistics of the final positions of symmetric Dunkl processes starting from the initial condition $\bx=\bzero$ at $t=0$ coincides with the eigenvalue statistics of the $\beta$ ensembles with $\beta=2k$. Therefore, we expect to obtain the first-order correction for large finite values of $k$ for the present symmetric Dunkl processes as well. However, due to the nature of our formulation, we will have to investigate the first order ($O(1/k)$) correction of $V_k$. We intend to address this calculation in the near future.

Finally, the existence of explicit forms for the symmetrized Dunkl kernel for the type-$B$ and type-$D$ root systems, as given in \cite{bakerforrester97, demni08C}, should allow for an analysis similar to that in this work on other types of Dunkl processes. In particular, we expect the radial type-$B$ Dunkl process to be related with the type-$B$ PF spin chain \cite{yamamototsuchiya96} in a manner similar to the symmetric Dunkl processes and the PF spin chain described by \eref{PFsc}.

\ack{
The authors would like to thank S. Kakei for his remarks on the CM systems and the PF spin chain, and the referees for their helpful comments and suggestions. 
SA would like to thank T. Mori for his helpful insights on this work. 
SA is supported by the Monbukagakusho: MEXT scholarship for research students.
MK is supported by
the Grant-in-Aid for Scientific Research (C)
(Grant No.21540397) from the Japan Society for
the Promotion of Science.
}

%

\appendix

\section{Root Systems and Multiplicity Functions}\label{RootSys}

The definition of a root system depends on the reflection operator \cite{rosler08}, which is given by 
\begin{equation}\label{reflection}
\sigma_{\balpha}\bx=\bx-2\frac{\bx\cdot \balpha}{\balpha\cdot \balpha}\balpha,
\end{equation}
where $\balpha,\bx\in\RR^N$, and $\sigma_{\balpha}\bx$ denotes the vector obtained by reflecting $\bx$ across the plane normal to the vector $\balpha$.

A root system is defined as a finite set of vectors $R$ which is invariant under reflections along its own elements, called roots. That is, $\sigma_{\balpha}\bxi\in R$ for all $\balpha,\bxi\in R$. We will assume that $R$ is reduced, i.e. for any one of its elements $\bxi$, $b \bxi\in R$ implies $b=\pm 1$. It is assumed that none of the roots is a zero vector.

For every root system, a base can be chosen such that all roots are a linear combination of the base vectors with either all coefficients negative or all coefficients positive. The elements of such a base are called simple roots, and they divide the root system into the positive and negative subsystems, here denoted as $R_+$ and $R_-$. The choice of simple roots is not unique, but the positive and negative subsystems they define always divide the root system into two disjoint parts of the same size (cardinality).

The multiplicity function $k$ is a function of a root in $R$ that returns a real value and that is invariant under reflections along elements of $R$. That is,
\begin{equation}
k(\balpha)=k(\sigma_{\balpha^\prime}\balpha)\ {}^\forall \balpha, \balpha^\prime \in R.
\end{equation}
The multiplicity function can be understood as a set of parameters (or multiplicities) which are assigned to each subset of $R$ formed by roots that are related by a reflection (or composition of reflections) along some other root(s).

It is a well-known fact \cite{rosler08} that the multiplicity function reduces to a single parameter for the $A_{N-1}$ root system. For simplicity, we will denote this root system by $A$. Let us prove this fact as follows. $A$ is given by
\begin{equation}
A=\{\balpha_{ij}=(\be_i-\be_j)/\sqrt{2}:i,j=1,\ldots,N;i\neq j\}.
\end{equation}
We will choose the positive subsystem
\begin{equation}
A_{+}=\{\balpha_{ij}=(\be_i-\be_j)/\sqrt{2}:1\leq j<i\leq N\},\label{positivesubsystem}
\end{equation}
where $\be_i$ denotes the $i$-th unit base vector. This positive subsystem is generated by the simple roots
\begin{equation}
\balpha_{i+1,i}=(\be_{i+1}-\be_i)/\sqrt{2},\ i=1,\ldots,N-1.\label{simpleroots}
\end{equation}

Hereafter, we will write $\sigma_{\alpha_{ij}}=\sigma_{ij}$. We must note that the effect of $\sigma_{ij}$ on an arbitrary vector $\bx$ is that of exchanging its $i$-th and $j$-th components. To see this, we will compute the $l$-th component of $\sigma_{ij}\bx$:
\begin{equation}
(\sigma_{ij}\bx)_l=x_l-(x_i-x_j)(\delta_{il}-\delta_{jl}).
\end{equation}
It is easy to see that $x_l$ remains unchanged for $l\neq i,j$, that $(\sigma_{ij}\bx)_i=x_j$ and that $(\sigma_{ij}\bx)_j=x_i$. Therefore, the group generated by the reflections along the elements of $A$ with composition as the group operation is the symmetric group $S_N$.

In view of this property of $A$, we see that we can obtain any root from at most two reflections of any other root, as shown below. Consider an arbitrary root $\balpha_{ij}$ and apply to it the reflection $\sigma_{mj}$, with $m$ arbitrary. This reflection exchanges the $j$-th and the $m$-th components of $\balpha_{ij}$, leaving us with $\balpha_{im}$. If we reflect once more using $\sigma_{il}$, with $l$ arbitrary, we obtain $\balpha_{lm}$ as desired. Since $k$ is invariant under any of these reflections, one obtains
\[k(\balpha_{ij})=k(\sigma_{mj}\sigma_{il}\balpha_{ij})=k(\balpha_{lm})\]
in general, and therefore it can be concluded that $k$ is independent of its argument, so it is a single parameter.

\section{Multivariate Special Functions}\label{MVSFns}

Let us summarize four common families of symmetric polynomials, keeping in mind that we will consider polynomials of $N$ variables. Apart from the monomial symmetric functions and the Jack functions (introduced in Section \ref{main1theorem}), we introduce the elementary symmetric and Schur functions.

For an integer $0\leq n\leq N$, the elementary symmetric function $e_n(\bx)$ is given by
\begin{equation}\label{edefinition}
e_n(\bx)=\sum_{1\leq i_1<\cdots< i_n\leq N}\prod_{j=1}^nx_{i_j}.
\end{equation}
For example, $e_0(\bx)=1$, $e_1(\bx)=\sum_{i=1}^Nx_i$, $e_2(\bx)=\sum_{1\leq i<j \leq N}x_ix_j$ and $e_N(\bx)=\prod_{i=1}^N x_i$. When the subscript of $e$ is a partition, it is given by
\begin{equation}
e_{\tau}(\bx)=\prod_{i=1}^{l(\tau)}e_{\tau_i}(\bx).
\end{equation}

The Schur function $s_{\tau}(\bx)$ is given by the Jacobi-Trudi formula \cite{fulton, macdonald}
\begin{equation}
s_{\tau}(\bx)=\frac{\det_{1\leq i,j\leq N}[x_j^{\tau_i+N-i}]}{\det_{1\leq i,j\leq N}[x_j^{N-i}]}.
\end{equation}


There is one Jack function for each partition $\tau$, which is unique up to a normalization constant. We will use the following two normalizations: the P normalization, which is defined by the following linear combination of the monomial symmetric functions $m_\lambda(\bx)$,
\begin{equation}\label{JackP}
\PP{\tau}{\alpha}(\bx)=\sum_{\substack{\lambda:\lambda\leq\tau\cr |\lambda|=|\tau|}}u_{\tau\lambda}(\alpha)m_{\lambda}(\bx),
\end{equation} 
and the C normalization, which is defined so that
\begin{equation}
\sum_{n=0}^\infty\frac{1}{n!}\sum_{\substack{\tau:l(\tau)\leq N\cr |\tau|=n}}\CC{\tau}{\alpha}(\bx)=\exp (x_1+x_2+\ldots+x_N)\label{JackCexp}
\end{equation}
for all $\alpha$.
The partition-indexed matrix $u_{\tau\lambda}(\alpha)$, as shown in \cite{macdonald}, is an upper triangular matrix whose diagonal entries are equal to one, and its non-diagonal entries are sums of ratios of the form $(a\alpha + b)/(c\alpha + d)$, where $a$, $b$, $c$ and $d$ are non-negative integers. The Jack functions reduce to the monomial, elementary or Schur functions depending on the the parameter $\alpha$, and they are expressed in terms of $m_{\lambda}(\bx)$ in the form of \eref{JackP} as per Table~\ref{Jacktable} \cite{macdonald}.
\Table{\label{Jacktable}The Jack function $\PP{\tau}{\alpha}(\bx)$ and its particular cases. Note that the partition indexing $e_{\tau^\prime}(\bx)$ is a conjugate partition. All matrices are upper triangular with diagonal entries equal to one. The matrix $K_{\tau\lambda}$ is called the Kostka matrix.}
\br
Parameter	&Function				&Matrix					&Function name\\
\mr
$\alpha$		&$\PP{\tau}{\alpha}(\bx)$	&$u_{\tau\lambda}(\alpha)$	&Jack\\
0			&$e_{\tau^\prime}(\bx)$	&$a_{\tau\lambda}$			&Elementary symmetric\\
1			&$s_{\tau}(\bx)$		&$K_{\tau\lambda}$			&Schur\\
$\infty$		&$m_\tau(\bx)$			&$\delta_{\tau\lambda}$		&Monomial symmetric\\
\br
\endTable

Using Jack functions, one can define the generalized hypergeometric function $\FF{\alpha}\left(\bx,\by\right)$ as follows \cite{bakerforrester97,forrester10},
\begin{equation}\label{GHF}
\FF{\alpha}(\bx,\by)=\sum_{n=0}^\infty\frac{1}{n!}\sum_{\substack{\tau:l(\tau)\leq N\cr |\tau|=n}}\frac{\CC{\tau}{\alpha}(\bx)\CC{\tau}{\alpha}(\by)}{\CC{\tau}{\alpha}(\bone)},
\end{equation}
with $\bone$ as given in \eref{vectorone}. For the general definition of the generalized hypergeometric function $\GHG{p}{q}{\alpha}$, see \cite{bakerforrester97}. It is desirable to express the above in terms of the P-normalized Jack functions so that we can use the content of Table~\ref{Jacktable}. For this purpose, we require the generalized Pochhammer symbol $(a)_\tau^{(\alpha)}$, which is defined as the product 
\begin{equation}\label{GPS}
(a)_\tau^{(\alpha)}=\prod_{i=1}^{l(\tau)}\frac{\Gamma(a-(i-1)/\alpha+\tau_i)}{\Gamma(a-(i-1)/\alpha)},
\end{equation}
as well as the functions
\begin{eqnarray}
c_\tau(\alpha)=\prod_{(i,j)\in \tau}(\alpha(\tau_i-j)+\tau_j^\prime-i+1),\nonumber\\
c_\tau^\prime(\alpha)=\prod_{(i,j)\in \tau}(\alpha(\tau_i-j+1)+\tau_j^\prime-i).\label{hooks}
\end{eqnarray}
With these definitions, one can write \cite{DumitriuEdelman07}
\begin{eqnarray}
\CC{\tau}{\alpha}(\bx)=\frac{\alpha^{|\tau|}|\tau|!}{c_\tau^\prime(\alpha)}\PP{\tau}{\alpha}(\bx),&\quad&\PP{\tau}{\alpha}(\bone)=\frac{\alpha^{|\tau|}(N/\alpha)_\tau^{(\alpha)}}{c_\tau(\alpha)}.\label{CPconversion}
\end{eqnarray}
Insertion of the above in \eref{GHF} yields, with $\alpha=1/k$,
\begin{equation}\label{RadDunklKerJackP}
\FF{1/k}(\bx,\by)=\sum_{n=0}^\infty\sum_{\substack{\tau:l(\tau)\leq N\cr |\tau|=n}}\frac{c_\tau (1/k)}{c_\tau^\prime (1/k)}\frac{\PP{\tau}{1/k}(\bx)\PP{\tau}{1/k}(\by)}{(kN)_\tau^{(1/k)}}.
\end{equation}

\section{Dunkl Operators, Dunkl Heat Equation and TPD}\label{dunklstuff}



Dunkl operators are given by \cite{Dunkl89},
\begin{equation}
T_{\bxi} f(\bx)=\partial_{\bxi}f(\bx)+\sum_{\balpha \in R_+}k(\balpha)\frac{f(\bx)-f(\sigma_{\balpha} \bx)}{\balpha\cdot\bx} \balpha\cdot\bxi\label{DunklDefinition}.
\end{equation}
The first term is a derivative along the vector $\bxi$. Each of the summands in the second term is proportional to the odd part of $f$ along the vector $\balpha$.

Dunkl defined this operator in order to study multivariate orthogonal polynomials and special functions related to reflection groups. Once a root system and multiplicity function are chosen, the operator $T_{\bxi}$ 
commutes with $T_{\bxi^\prime}$ and therefore has many of the characteristics of partial derivatives. Hereafter we will use the notation $T_i=T_{\be_i}$ as in Section~\ref{review1}.

In \cite{rosler98}, R\"{o}sler found the TPD $p_k(t,\by|\bx)$ that solves \eref{dunklheatequation} as its Kolmogorov backward equation, with $T_i$ given by \eref{DunklDefinition}. The generalized heat equation \eref{dunklheatequation}, or Dunkl heat equation, is given explicitly by \cite{Dunkl89}
\begin{eqnarray}
\fl\frac{\partial}{\partial t}p_k(t,\by|\bx)=&&\frac{1}{2}\Delta^{(x)} p_k(t,\by|\bx)+\!\!\sum_{\balpha\in R_+}k(\balpha)\frac{\partial_{\balpha} p_k(t,\by|\bx)}{\balpha\cdot\bx}\nonumber\\
&&\qquad-\sum_{\balpha\in R_+}\!\!k(\balpha)\frac{\alpha^2}{2}\frac{p_k(t,\by|\bx)-p_k(t,\by|\sigma_{\balpha} \bx)}{(\balpha\cdot\bx)^2}.\label{DunklHeat}
\end{eqnarray}

The calculation of $p_k(t,\by|\bx)$ can be accomplished using the Dunkl transform, which requires both the Dunkl kernel defined in Section~\ref{review2} and the weight function
\begin{equation}
w_k(\bx)=\prod_{\balpha\in R}|\balpha\cdot\bx|^{k(\balpha)}.
\end{equation}
For fixed $R$ and $k$ and functions $f\in L^1(\RR^N,w_k)$, the set of integrable functions with respect to the weight function $w_k(\bx)$, the Dunkl transform is defined by the equation \cite{dunkl92}
\begin{equation}
\hat{f}_k(\bxi)=\frac{1}{c_k}\int_{\RR^N}f(\bx)E_k(-\ii\bxi,\bx)w_k(\bx)\ud^N x,\label{dunkltransform}
\end{equation}
and
\begin{equation}
c_k=\int_{\RR^N}\rme^{-\xi^2/2}w_k(\bxi)\ud^N \xi
\end{equation}
is a normalization constant obtained from a Selberg integral \cite{Mehta04}. This transform has many of the properties of the Fourier transform, and it can be used to solve equations involving Dunkl operators in the same way the Fourier transform is used to solve differential equations.


The generalized Dunkl translation $\tau_{\by}$ is defined as 
\begin{equation}
\tau_{\by}f(\bx)=\frac{1}{c_k}\int_{\RR^N}\hat{f}_k(\bxi)E_k(\ii\bx,\bxi)E_k(\ii\by,\bxi)w_k(\bxi)\ud^N \xi,\label{generalizedtranslation}
\end{equation}
which reduces to a regular translation when $k=0$. Because \eref{generalizedtranslation} does not change when $\bx$ and $\by$ are exchanged, we see that $\tau_{\by}f(\bx)=\tau_{\bx}f(\by)$.

Using the Dunkl transform on \eref{DunklHeat}, one obtains its Green function centred at the origin. Using the translation \eref{generalizedtranslation} yields the complete Green function. Let us define the sum of $k(\balpha)$ over the positive subsystem as
\begin{equation}
\gamma=\sum_{\balpha\in R_+}k(\balpha).
\end{equation}
Then, the Green function of \eref{DunklHeat} is given by


\begin{equation}
\Gamma_k(t,\by|\bx)=\frac{\rme^{-(x^2+y^2)/2t}}{c_k t^{N/2+\gamma}}E_k\left(\frac{\bx}{\sqrt t},\frac{\by}{\sqrt t}\right).
\end{equation}
It is known that this Green function is normalized with respect to $w_k(\bx)$. That is,
\begin{eqnarray}
\int_{\RR^N}\Gamma_k(t,\by|\bx)w_k(\by)\ud^N y&=&1.
\end{eqnarray}
Therefore, we recognize the TPD of the Dunkl process to be
\begin{eqnarray}
p_k(t,\by|\bx)&=&w_k(\by)\Gamma_k(t,\by|\bx)=w_k(\by)V_kp_0(t,\by|\bx)\nonumber\\
&=&w_k\left(\frac{\by}{\sqrt{t}}\right)\frac{\rme^{-(x^2+y^2)/2t}}{c_k t^{N/2}}E_k\left(\frac{\bx}{\sqrt{t}},\frac{\by}{\sqrt{t}}\right).\label{TPDk}
\end{eqnarray}
The details of the calculation leading to this TPD can be found in \cite{rosler98}. The factor $w_k\left(\by/\sqrt t\right)$ can be interpreted as follows: in \eref{DunklHeat}, the denominators in the second and third terms on the RHS denote a repulsion from the planes defined by $\balpha\cdot\bx=0$, $\balpha\in R$, and therefore, the probability of any process finishing at any point such that $\balpha\cdot\by=0$ should be zero. Indeed, $w_k(\by)=\prod_{\balpha\in R}|\balpha\cdot\by|^{k(\balpha)}$ cancels whenever $\balpha\cdot\by=0$ for any $\balpha\in R$, which reflects this fact.

For the purposes of this work, we are interested in $p_k(t,\by|\bx)$ for the root system of type $A$. In this case, the Dunkl operator in the direction $\bxi$ is given by
\begin{equation}\label{DunklAGen}
T_{\bxi} f(\bx)=\partial_{\bxi}f(\bx)+k\sum_{1\leq i<j\leq N}\frac{f(\bx)-f(\sigma_{ij} \bx)}{x_j-x_i} (\xi_j-\xi_i),
\end{equation}
from which \eref{DunklDefinitionA} follows. Let us calculate the quantities that form $p_k$ in this particular case. First, the weight function $w_k$ becomes proportional to the Vandermonde determinant $h_N(\bx)$,
\begin{equation}
w_k(\bx)=\prod_{1\leq j<l\leq N}\left|\frac{1}{\sqrt{2}}(\be_l-\be_j)\cdot\bx\right|^{2k}=\frac{|h_N(\bx)|^{2k}}{2^{kN(N-1)/2}}.
\end{equation}
The normalization constant $c_k$ is then
\begin{eqnarray}
\fl c_k&=&\frac{1}{2^{kN(N-1)/2}}\int_{\RR^N}\rme^{-x^2/2}|h_N(\bx)|^{2k}\ud^N x=\frac{(2\pi)^{N/2}}{2^{kN(N-1)/2}}\prod_{j=1}^N\frac{\Gamma(1+jk)}{\Gamma(1+k)},
\end{eqnarray}
where we have used a particular case of the Selberg integral (equation (17.6.7) in \cite{Mehta04}). Inserting the above in \eref{TPDk} yields \eref{TPDkA}.

\section{Extema and Maximum value of $F_N(\bv,t)$}\label{extrema}

Here, we prove two lemmas necessary to complete the proof of Theorem~\ref{lasttheorem}. The first concerns the location of the extrema of $F_N(\bv,t)$.

\begin{lemma}\label{lemmahermiteroots}
The extrema of the function
\begin{equation}
F_N(\bv,t)=\frac{N}{2}(N-1)(1-\log t)-\sum_{j=1}^Nj\log j+2\log|h_N(\bv)|-\frac{v^2}{2t}
\end{equation}
are located at $\bv=\sqrt{2t}\bz_N$ or any of its permutations, where $\bz_N$ is the vector of roots of the Hermite polynomial $H_N(x)$, given by \eref{zroots} and \eref{hermitep} respectively. Furthermore, all extrema are local maxima.
\end{lemma}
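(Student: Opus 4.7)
The plan is to derive the critical point equations, match them to a classical identity for roots of Hermite polynomials, and then use a concavity argument to establish uniqueness and the local-maximum property.

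First I would compute the gradient of $F_N(\bv,t)$. Writing $\log|h_N(\bv)|=\sum_{i<j}\log|v_i-v_j|$, differentiation gives
\begin{equation}
\frac{\partial F_N}{\partial v_i}(\bv,t)=2\sum_{\substack{j=1\cr j\neq i}}^{N}\frac{1}{v_i-v_j}-\frac{v_i}{t},
\end{equation}
so any critical point satisfies $\sum_{j\neq i}(v_i-v_j)^{-1}=v_i/(2t)$ for every $i$. Next I would show that $\bv=\sqrt{2t}\,\bz_N$ satisfies this system by invoking the Hermite ODE. Since $H_N$ has simple real roots $z_{i,N}$, the logarithmic derivative identity
\begin{equation}
\frac{H_N''(z_{i,N})}{H_N'(z_{i,N})}=2\sum_{\substack{j=1\cr j\neq i}}^{N}\frac{1}{z_{i,N}-z_{j,N}},
\end{equation}
combined with the Hermite equation $H_N''(x)-2xH_N'(x)+2NH_N(x)=0$ evaluated at $x=z_{i,N}$, yields $\sum_{j\neq i}(z_{i,N}-z_{j,N})^{-1}=z_{i,N}$. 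Substituting $v_i=\sqrt{2t}\,z_{i,N}$ reproduces the critical-point system exactly, and the symmetry of $F_N$ under coordinate permutations guarantees that every $\rho\bz_N$ with $\rho\in S_N$ gives a critical point as well.

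For uniqueness and the maximum property, I would exploit strict concavity of $F_N$ on each open Weyl chamber $\{v_{\rho(1)}<\cdots<v_{\rho(N)}\}$. The quadratic term $-v^2/(2t)$ is strictly concave, while each pairwise term $\log|v_i-v_j|$ has Hessian with respect to $(v_i,v_j)$ given by $u^{-2}\bigl(\begin{smallmatrix}-1&\phantom{-}1\\\phantom{-}1&-1\end{smallmatrix}\bigr)$ with $u=v_i-v_j$, which is negative semidefinite; summing these contributions keeps $2\log|h_N(\bv)|$ concave on every chamber, so $F_N(\cdot,t)$ is strictly concave there. Because $F_N\to-\infty$ both at the chamber boundary (some $|v_i-v_j|\to 0$) and at infinity (the $-v^2/(2t)$ term dominates), $F_N$ attains a unique interior global maximum in each chamber, which must be the unique critical point in that chamber. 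Since the $N!$ permutations of $\sqrt{2t}\,\bz_N$ already supply one critical point per chamber, these are all of them, and each is a (local, in fact chamber-wide) maximum.

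The main obstacle I anticipate is the justification of strict concavity, in particular recognising that although each pairwise Hessian is only negative semidefinite (with a one-dimensional null direction along the diagonal $v_i=v_j$), the addition of the strictly concave quadratic $-v^2/(2t)$ immediately promotes the total Hessian to strict negative definiteness on every chamber. Once this is in place, uniqueness in each chamber and the local-maximum conclusion follow without further calculation.
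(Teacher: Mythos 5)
Your proposal is correct, and it reaches the conclusion by a genuinely different route from the paper on the key step of identifying the critical points. The gradient computation and the negative-definiteness of the Hessian are the same in substance: the paper evaluates the quadratic form $\sum_{i,j}u_i\,\partial^2 F_N/\partial v_j\partial v_i\,u_j=-\frac{1}{t}\sum_i u_i^2-2\sum_{i<j}(u_i-u_j)^2/(v_i-v_j)^2<0$, which is exactly your ``strictly concave quadratic plus negative semidefinite log terms'' observation. Where you diverge is the identification of the critical points with the Hermite roots. The paper argues the \emph{only if} direction directly: it rescales the critical equations to $z_i=\sum_{j\neq i}(z_i-z_j)^{-1}$, forms the polynomial $p(x)=c\prod_n(x-z_n)$, shows $p''(z_i)=2z_ip'(z_i)$ at every root, and invokes Szeg\H{o}'s characterization that this relation on the zeros forces $p\propto H_N$. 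You instead prove the easy \emph{if} direction (the Hermite roots satisfy the system, via $H_N''-2xH_N'+2NH_N=0$ and the logarithmic-derivative identity at a simple root) and then close uniqueness indirectly: strict concavity of $F_N$ on each open Weyl chamber allows at most one critical point per chamber, and the $N!$ orderings of $\sqrt{2t}\,\bz_N$ already supply one in each, so there are no others. Your route is more self-contained, since it does not rely on the uniqueness half of Szeg\H{o}'s characterization of Hermite zeros, but it does need the global chamber decomposition and the boundary/infinity behaviour (or, more simply, just the at-most-one-critical-point consequence of strict concavity on a convex chamber); the paper's route solves the critical system outright and never needs to discuss chambers. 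Both arguments are complete, and your concavity step is sound: each chamber is convex, and adding $-v^2/(2t)$ makes the Hessian strictly negative definite there, which also yields the ``all extrema are local maxima'' claim.
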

\begin{proof}
We first calculate the first-order partial derivatives of $F_N(\bv,t)$ relative to $\bv$ and equate them to zero.
\begin{equation}
\frac{\partial}{\partial v_i}F_N(\bv,t)=\sum_{\substack{j:j\neq i\cr j=1}}^N\frac{2}{v_i-v_j}-\frac{v_i}{t}=0
\end{equation}
Hence, the extrema of $F_N(\bv,t)$ must obey the relation
\begin{equation}
v_i=\sum_{\substack{j:j\neq i\cr j=1}}^N\frac{2t}{v_i-v_j}.\label{extcond}
\end{equation}
The second order derivatives of $F_N(\bv,t)$ are
\begin{eqnarray}
\frac{\partial^2}{\partial v_j\partial v_i}\left[2\log|h_N(\bv)|-\frac{v^2}{2t}\right]&=&\frac{\partial}{\partial v_j}\left[\sum_{l:l\neq i}\frac{2}{v_i-v_l}-\frac{v_i}{t}\right]\nonumber\\
&=&
\left\{
\begin{array}{cl}
-\sum_{l:l\neq i}\frac{2}{(v_i-v_l)^2}-\frac{1}{t} & \textrm{if }i=j,\\
\frac{2}{(v_i-v_j)^2} & \textrm{if }i\neq j.
\end{array}
\right.\label{secondder}
\end{eqnarray}
The matrix formed by the $N\times N$ second order derivatives above is negative definite for all vectors $\bv$ with non-repeating components. To show this, we consider an arbitrary real vector $\bu$ and calculate the quadratic form associated to \eref{secondder}.
\begin{equation}
\sum_{1\leq i,j\leq N}u_i\frac{\partial^2F_N(\bv,t)}{\partial v_j\partial v_i}u_j=-\frac{1}{t}\sum_{i=1}^Nu_i^2-2\sum_{1\leq i<j\leq N}\frac{(u_i-u_j)^2}{(v_i-v_j)^2}\leq 0\label{quadraticform}
\end{equation}
Here, the equality holds only when all the ${u_i}$ are equal to zero. Hence, all extrema given by \eref{extcond} are maxima.

Let us focus on the location of the extrema. Applying the scaling $\bv=\sqrt{2t}\bz$ to \eref{extcond} we obtain
\begin{equation}
z_{i}=\sum_{\substack{j:j\neq i\cr j=1}}^N\frac{1}{z_{i}-z_{j}},\label{eqforz}
\end{equation}
for all $i=1\ldots,N$. Therefore, it suffices to solve the above equation for $\bz$ to find the location of the extrema of $F_N(\bv,t)$.
Note that, given $\bz$, any of its permutations solve \eref{eqforz}:
\begin{equation}
z_{\rho(i)}=\sum_{\substack{j:j\neq i\cr j=1}}^N\frac{1}{z_{\rho(i)}-z_{\rho(j)}}
\end{equation}
for any $\rho\in S_N$. Now we prove that \eref{eqforz} implies that $\{z_i\}_{i=1,\ldots,N}$ must be the roots of the $N$-th Hermite polynomial. Let us multiply \eref{eqforz} by $\prod_{\substack{l:l\neq i\cr l=1}}^N(z_i-z_j)$.
\begin{equation}
z_{i}\prod_{\substack{l:l\neq i\cr l=1}}^N(z_i-z_l)=\sum_{\substack{j:j\neq i\cr j=1}}^N\prod_{\substack{l:l\neq i,j\cr l=1}}^N(z_i-z_l)\label{eqforzprod}
\end{equation}
Now, we consider a polynomial whose roots are $\{z_i\}_{i=1,\ldots,N}$:
\begin{equation}
p(x)=c \prod_{n=1}^N(x-z_n),
\end{equation}
with $c$ a non-zero constant. The first two derivatives of this polynomial are:
\begin{equation}
p^\prime(x)=\frac{\ud}{\ud x}p(x)=c\sum_{j=1}^N\prod_{\substack{n:n\neq j\cr n=1}}^N(x-z_n)
\end{equation}
and
\begin{equation}
p^{\prime\prime}(x)=\frac{\ud^2}{\ud x^2}p(x)=2c\sum_{1\leq j<l\leq N}\prod_{\substack{n:n\neq j,l\cr n=1}}^N(x-z_n).
\end{equation}
At any of the values $z_i$, $p^{\prime\prime}(x)$ behaves as follows.
\begin{equation}
p^{\prime\prime}(z_i)=2c\sum_{\substack{j:j\neq i\cr j=1}}^N\prod_{\substack{n:n\neq i,j\cr n=1}}^N(z_i-z_n)
\end{equation}
We insert \eref{eqforzprod} to obtain
\begin{equation}
p^{\prime\prime}(z_i)=2cz_i\prod_{\substack{n:n\neq i\cr n=1}}^N(z_i-z_n)=2z_ip^\prime(z_i).\label{hzeros}
\end{equation}
It is known \cite{szego} that the differential relation on the zeros of the polynomial $p(x)$ is only fulfilled by the $N$-th Hermite polynomial. Indeed, it solves the differential equation
\begin{equation}
H^{\prime\prime}_N(x)-2xH^\prime_N(x)+2NH_N(x)=0,
\end{equation}
which reduces to \eref{hzeros} when $x=z_{i,N},$ with $i=1,\ldots,N$ and $z_{i,N}$ is the $i$-th root of $H_N(x)$. Hence, $p(x)\propto H_N(x),$ and $z_i=z_{i,N}$.
\end{proof}

In the second lemma, we find the maximum value of $F_N(\bv,t)$.
\begin{lemma}
The maximum value of $F_N(\bv,t)$ is zero.
\end{lemma}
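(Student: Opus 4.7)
By the previous lemma, every extremum of $F_N(\bv,t)$ is a local maximum, and the extrema are exactly the points $\bv = \sqrt{2t}\,\rho\bz_N$ with $\rho\in S_N$. Since $F_N$ is symmetric in the components of $\bv$, all these extrema share the common value $F_N(\sqrt{2t}\,\bz_N,t)$, so the plan is simply to evaluate $F_N$ at $\bv=\sqrt{2t}\,\bz_N$ and verify that the result is $0$.

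First I would use the homogeneity $h_N(\sqrt{2t}\,\bz_N) = (2t)^{N(N-1)/4}\,h_N(\bz_N)$ to separate the $t$-dependence:
\begin{equation*}
2\log|h_N(\sqrt{2t}\,\bz_N)| - \frac{(\sqrt{2t}\,\bz_N)^2}{2t} = \frac{N(N-1)}{2}\log(2t) + 2\log|h_N(\bz_N)| - \sum_{j=1}^N z_{j,N}^2.
\end{equation*}
The $\log t$ contributions then cancel against the $\frac{N(N-1)}{2}(1-\log t)$ term in $F_N$, leaving
\begin{equation*}
F_N(\sqrt{2t}\,\bz_N,t) = \frac{N(N-1)}{2}(1+\log 2) - \sum_{j=1}^N j\log j + 2\log|h_N(\bz_N)| - \sum_{j=1}^N z_{j,N}^2,
\end{equation*}
which is a pure statement about the roots of $H_N$ with no $t$ left.

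Next I would compute $\sum_j z_{j,N}^2$ directly from the extremum equation \eref{eqforz} proved in Lemma~\ref{lemmahermiteroots}. Multiplying $z_i = \sum_{j\neq i}(z_i-z_j)^{-1}$ by $z_i$ and summing over $i$, the right-hand side symmetrizes to
\begin{equation*}
\sum_{i\neq j}\frac{z_{i,N}}{z_{i,N}-z_{j,N}} = \sum_{i<j}\frac{z_{i,N}-z_{j,N}}{z_{i,N}-z_{j,N}} = \binom{N}{2},
\end{equation*}
so $\sum_{j=1}^N z_{j,N}^2 = N(N-1)/2$. This handles the quadratic term cleanly.

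The main obstacle is the discriminant of the Hermite polynomial, i.e.\ evaluating $|h_N(\bz_N)|$. The classical identity
\begin{equation*}
\prod_{1\le i<j\le N}(z_{i,N}-z_{j,N})^2 = 2^{-N(N-1)/2}\prod_{k=1}^N k^k
\end{equation*}
(which can be derived, e.g., from the three-term recursion for $H_N$ via a resultant calculation, or read off from Selberg's integral as in \ref{dunklstuff}) gives
\begin{equation*}
2\log|h_N(\bz_N)| = -\frac{N(N-1)}{2}\log 2 + \sum_{k=1}^N k\log k.
\end{equation*}
Substituting this together with $\sum z_{j,N}^2 = N(N-1)/2$ into the reduced expression for $F_N(\sqrt{2t}\,\bz_N,t)$, the $\log 2$ terms cancel, the $\sum j\log j$ terms cancel, and the constant $N(N-1)/2$ terms cancel, leaving $F_N(\sqrt{2t}\,\bz_N,t)=0$ as claimed. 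Since this is the value at every local maximum and $F_N\to-\infty$ as $\bv$ approaches a coincidence plane or $|\bv|\to\infty$, this is the global maximum.
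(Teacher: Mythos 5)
Your proposal is correct, and it follows the same overall skeleton as the paper (evaluate $F_N$ at $\bv=\sqrt{2t}\,\bz_N$, then reduce the claim to computing $\sum_j z_{j,N}^2$ and the Hermite discriminant), but it differs in how the two ingredients are handled. For $\sum_j z_{j,N}^2$ you symmetrize the stationarity equation \eref{eqforz}, pairing $(i,j)$ with $(j,i)$ to get $\sum_{i\neq j} z_{i,N}/(z_{i,N}-z_{j,N})=\binom{N}{2}$; this is shorter and more elementary than the paper's route, which extracts the second elementary symmetric function of the roots by comparing coefficients in $H_N^\prime=2NH_{N-1}$ and running an induction, and it reuses Lemma~\ref{lemmahermiteroots} (the roots do satisfy \eref{eqforz}, via \eref{hzeros}) rather than new Hermite identities. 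For the Vandermonde factor you simply quote the classical discriminant formula $\prod_{i<j}(z_{i,N}-z_{j,N})^2=2^{-N(N-1)/2}\prod_k k^k$, whereas the paper derives it from scratch following Szeg\"{o} (via $H_N^\prime(z_{i,N})=2NH_{N-1}(z_{i,N})$, the three-term recurrence and induction); citing Szeg\"{o} is legitimate and your suggested resultant-from-recursion derivation is essentially the paper's computation, though the aside that it can be ``read off from Selberg's integral'' is not straightforward and is best dropped. Two small merits of your write-up: you make the $t$-independence of $F_N(\sqrt{2t}\,\bz_N,t)$ explicit through the homogeneity of $h_N$, and you add the boundary argument ($F_N\to-\infty$ on the coincidence hyperplanes and as $|\bv|\to\infty$) that upgrades ``all extrema are local maxima of value zero'' to the global statement that $F_N<0$ away from $\sqrt{2t}\,\rho\bz_N$, which is exactly what the proof of Theorem~\ref{lasttheorem} needs and which the paper leaves implicit.
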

\begin{proof}
By Lemma~\ref{lemmahermiteroots}, the maximum value of $F_N(\bv,t)$ is located at $\bv=\sqrt{2t}\bz_N$. This yields
\begin{eqnarray}
\fl F_N(\sqrt{2t}\bz_N,t)&=&\frac{N}{2}(N-1)-z_N^2+2\log|h_N(\bz_N)|+\frac{N}{2}(N-1)\log 2-\sum_{j=2}^Nj\log j,
\end{eqnarray}
which is independent of $t$. Let us focus first on the term
\begin{equation}\label{squaretocompute}
z_N^2=\sum_{j=1}^Nz_{j,N}^2.
\end{equation}
The derivative property of the Hermite polynomials \cite{arfken} allows us to write
\begin{equation}\label{hermitederivativex}
H_N^\prime(x)=2NH_{N-1}(x),
\end{equation}
and using the product representation from Lemma~\ref{lemmahermiteroots} we obtain
\begin{equation}
\sum_{i=1}^N\prod_{\substack{j:j=1\cr j\neq i}}^N(x-z_{j,N})=N\prod_{j=1}^{N-1}(x-z_{j,N-1}).
\end{equation}
Since the above equation holds for all $x$, we can expand in powers of $x$ and equate the coefficients of the same order. In particular, the coefficients of $x^{N-3}$ form the equation
\begin{equation}
(N-2)\sum_{1\leq i<j\leq N}z_{i,N}z_{j,N}=N\sum_{1\leq i<j\leq N-1}z_{i,N-1}z_{j,N-1},
\end{equation}
and if we denote the double sum on the LHS by $s_{N}$, we obtain
\begin{equation}
s_{N}=\frac{N}{N-2}s_{N-1}.
\end{equation}
Using mathematical induction on this relation we find that
\begin{equation}
s_{N}=-\frac{N(N-1)}{4}.
\end{equation}
Now, since the Hermite polynomials have either odd or even symmetry, the sum of their roots equals zero. Therefore,
\begin{equation}
z_N^2=\left(\sum_{i=1}^Nz_{i,N}\right)^2-2s_{N}=\frac{N(N-1)}{2}.
\end{equation}

Now, let us calculate
\begin{equation}\label{logtocompute}
2\log|h_N(\bz_N)|.
\end{equation}
The main quantity we wish to find is the square of the Vandermonde determinant of the roots of the $N$-th Hermite polynomial. This quantity is known as the discriminant of the Hermite polynomials, and we calculate it here following Szeg\"{o} \cite{szego}. Using 
\begin{equation}\label{hermitederivative}
H_N^\prime(z_{i,N})=\lim_{x\to z_{i,N}}\frac{H_N(x)}{x-z_{i,N}}=2^N\prod_{\substack{j:j\neq i\cr j=1}}^N(z_{i,N}-z_{j,N}),
\end{equation}
we can write
\begin{eqnarray}
(h_N(\bz_N))^2&=&\prod_{1\leq i<j\leq N}(z_{j,N}-z_{i,N})^2=(-1)^{N(N-1)/2}\prod_{1\leq i\neq j\leq N}(z_{j,N}-z_{i,N})\nonumber\\
&=&\frac{(-1)^{N(N-1)/2}}{2^{N^2}}\prod_{i=1}^NH_N^\prime(z_{i,N})\nonumber\\
&=&\frac{(-1)^{N(N-1)/2}N^N}{2^{N(N-1)}}\prod_{i=1}^NH_{N-1}(z_{i,N}).
\end{eqnarray}
The last equality follows from \eref{hermitederivativex}. Let us focus on the last product:
\begin{eqnarray}
\prod_{i=1}^NH_{N-1}(z_{i,N})&=&2^{N(N-1)}\prod_{i=1}^N\prod_{j=1}^{N-1}(z_{i,N}-z_{j,N-1})\nonumber\\
&=&2^{N(N-1)}\prod_{j=1}^{N-1}\prod_{i=1}^{N}(z_{j,N-1}-z_{i,N})=\prod_{j=1}^{N-1}H_N(z_{j,N-1}).
\end{eqnarray}
From the recurrence relation
\begin{equation}
H_N(x)=2xH_{N-1}(x)-2(N-1)H_{N-2}(x)
\end{equation}
we obtain that $H_N(z_{j,N-1})=-2(N-1)H_{N-2}(z_{j,N-1})$, so the product above becomes
\begin{equation}
\prod_{i=1}^NH_{N-1}(z_{i,N})=[-2(N-1)]^{N-1}\prod_{j=1}^{N-1}H_{N-2}(z_{j,N-1}).
\end{equation}
Mathematical induction on the above yields
\begin{eqnarray}
\prod_{i=1}^NH_{N-1}(z_{i,N})&=&(-2)^{N(N-1)/2}\prod_{j=1}^{N-1}j^j.
\end{eqnarray}
Therefore,
\begin{eqnarray}
(h_N(\bz_N))^2&=&\frac{(-1)^{N(N-1)/2}N^N}{2^{N(N-1)}}\prod_{i=1}^NH_{N-1}(z_{i,N})=\frac{1}{2^{N(N-1)/2}}\prod_{j=1}^{N}j^j.
\end{eqnarray}
The logarithm of the above is
\begin{equation}
2\log|h_N(\bz_N)|=\sum_{j=1}^Nj\log j-\frac{N}{2}(N-1)\log 2,
\end{equation}
and thus the maximum value of $F_N(\bv,t)$ is zero for all its extrema, due to the fact that \eref{squaretocompute} and \eref{logtocompute} do not change when the roots $\bz_N$ are permuted.
\end{proof}

\bibliography{biblio}

\providecommand{\newblock}{}
\begin{thebibliography}{10}
\expandafter\ifx\csname url\endcsname\relax
  \def\url#1{{\tt #1}}\fi
\expandafter\ifx\csname urlprefix\endcsname\relax\def\urlprefix{URL }\fi
\providecommand{\eprint}[2][]{\url{#2}}

\bibitem{Dyson62}
Dyson F~J 1962 {\em J. Math. Phys.\/} {\bf 3} 1191--1198

\bibitem{Dyson62B}
Dyson F~J 1962 {\em J. Math. Phys.\/} {\bf 3} 1199--1215

\bibitem{Mehta04}
Mehta M~L 2004 {\em Random Matrices\/} 3rd ed (Elsevier)

\bibitem{forrester10}
Forrester P~J 2010 {\em Log-Gases and Random Matrices\/} (Princeton University
  Press)

\bibitem{katoritanemura07}
Katori M and Tanemura H 2007 {\em J. Stat. Phys.\/} {\bf 129} 1233--1277

\bibitem{fisher84}
Fisher M~E 1984 {\em J. Stat. Phys.\/} {\bf 34}(5) 667--729

\bibitem{katoritanemura02}
Katori M and Tanemura H 2002 {\em Phys. Rev. E\/} {\bf 66} 011105

\bibitem{grabiner99}
Grabiner D~J 1999 {\em Ann. I. H. Poincare B\/} {\bf 35} 177 -- 204

\bibitem{katoritanemura04}
Katori M and Tanemura H 2004 {\em J. Math. Phys.\/} {\bf 45} 3058

\bibitem{degennes}
de~Gennes P 1968 {\em J. Chem. Phys.\/} {\bf 48} 2257--2259

\bibitem{essamguttmann}
Essam J~W and Guttmann A~J 1995 {\em Phys. Rev. E\/} {\bf 52}(6) 5849--5862

\bibitem{baik06}
Baik J, Borodin A, Deift P and Suidan T 2006 {\em J. Phys. A: Math. Gen.\/}
  {\bf 39} 8965

\bibitem{bohigas83}
Bohigas O, Haq R and Pandey A 1983 Fluctuation properties of nuclear levels and
  widths: Comparison of theory with experiment {\em Nuclear Data for Science
  and Technology\/} ed B{\"o}ckhoff K (D. Reidel)

\bibitem{bohigas85}
Bohigas O, Haq R~U and Pandey A 1985 {\em Phys. Rev. Lett.\/} {\bf 54}(15)
  1645--1648

\bibitem{tierz04}
de~Haro S and Tierz M 2004 {\em Phys. Lett. B\/} {\bf 601} 201 -- 208

\bibitem{deharo05}
de~Haro S 2005 {\em Nucl. Phys. B\/} {\bf 730} 312 -- 351

\bibitem{forrester11}
Forrester P~J, Majumdar S~N and Schehr G 2011 {\em Nucl. Phys. B.\/} {\bf 844}
  500 -- 526

\bibitem{johansson00}
Johansson K 2000 {\em Comm. Math. Phys.\/} {\bf 209}(2) 437--476

\bibitem{johansson03}
Johansson K 2003 {\em Comm. Math. Phys.\/} {\bf 242}(1) 277--329

\bibitem{prahofer00}
Pr{\"a}hofer M and Spohn H 2000 {\em Phys. Rev. Lett.\/} {\bf 84}(21)
  4882--4885

\bibitem{imamura05}
Imamura T and Sasamoto T 2005 {\em Phys. Rev. E\/} {\bf 71}(4) 041606

\bibitem{sasamoto10}
Sasamoto T and Spohn H 2010 {\em Phys. Rev. Lett.\/} {\bf 104}(23) 230602

\bibitem{sasamotospohn10}
Sasamoto T and Spohn H 2010 {\em J. Stat. Mech.-Theory E.\/} {\bf 2010} P11013

\bibitem{amircorwinquastel11}
Amir G, Corwin I and Quastel J 2011 {\em Commun. Pure Appl. Math.\/} {\bf 64}
  466--537

\bibitem{imamura11}
Imamura T and Sasamoto T 2011 {\em J. Phys. A: Math. Theor.\/} {\bf 44} 385001

\bibitem{corwin12}
Corwin I 2012 {\em Random Matrices: Theory and Applications\/} {\bf 1} 1130001

\bibitem{takeuchi10}
Takeuchi K~A and Sano M 2010 {\em Phys. Rev. Lett.\/} {\bf 104}(23) 230601

\bibitem{takeuchi11}
Takeuchi K~A, Sano M, Sasamoto T and Spohn H 2011 {\em Sci. Rep.\/} {\bf 1} 34

\bibitem{guttmann98}
Guttmann A~J, Owczarek A~L and Viennot X~G 1998 {\em J. Phys. A: Math. Gen.\/}
  {\bf 31} 8123

\bibitem{krattenthaler00}
Krattenthaler C, Guttmann A~J and Viennot X~G 2000 {\em J. Phys. A: Math.
  Gen.\/} {\bf 33} 8835

\bibitem{fulton}
Fulton W 1997 {\em Young Tableaux: with Applications to Representation Theory
  and Geometry\/} (Cambridge University Press)

\bibitem{macdonald}
Macdonald I~G 1995 {\em Symmetric Functions and Hall Polynomials\/} (Oxford
  University Press)

\bibitem{Dunkl89}
Dunkl C~F 1989 {\em Trans. Amer. Math. Soc.\/} {\bf 311} 167--183

\bibitem{dunklxu}
Dunkl C~F and Xu Y 2001 {\em Orthogonal Polynomials of Several Variables\/}
  (Cambridge University Press)

\bibitem{Dunkl91}
Dunkl C~F 1991 {\em Can. J. Math.\/} {\bf 43} 1213--1227

\bibitem{rosler08}
R{\"o}sler M and Voit M 2008 Dunkl theory, convolution algebras, and related
  markov processes {\em Harmonic \& Stochastic Analysis of Dunkl Processes\/}
  ed Graczyk P, R{\"o}sler M and Yor M (HERMANN Math{\'e}matiques)

\bibitem{demni08A}
Demni N 2008 A guided tour in the world of radial dunkl processes {\em Harmonic
  \& Stochastic Analysis of Dunkl Processes\/} ed Graczyk P, R{\"o}sler M and
  Yor M (HERMANN Math{\'e}matiques)

\bibitem{demni08B}
Demni N 2008 {\em SIGMA\/} {\bf 4} 74

\bibitem{demni09A}
Demni N 2009 {\em C. R. Math.\/} {\bf 347} 1125--1128

\bibitem{demni09B}
Demni N 2009 Radial dunkl processes associated with dihedral systems {\em
  S{\'e}minaire de Probabilit{\'e}s XLII\/} ed Donati-Martin C, {\'E}mery M,
  Rouault A and Stricker C (Springer)

\bibitem{dunkl92}
Dunkl C~F 1992 Hankel transforms associated to finite reflection groups {\em
  Hypergeometric Functions on Domains of Positivity, {J}ack Polynomials, and
  Applications ({T}ampa, {FL}, 1991)\/} ({\em Contemp. Math.\/} vol 138)
  (Providence, RI: Amer. Math. Soc.) pp 123--138

\bibitem{rosler98}
R{\"o}sler M 1998 {\em Comm. Math. Phys.\/} {\bf 192}(3) 519--542

\bibitem{calogero71}
Calogero F 1971 {\em J. Math. Phys.\/} {\bf 12} 419--436

\bibitem{vanDiejen97}
van Diejen J~F 1997 {\em Comm. Math. Phys.\/} {\bf 188}(2) 467--497

\bibitem{hikamik96}
Hikami K 1996 {\em J. Phys. Soc. Jpn.\/} {\bf 65} 394--401

\bibitem{lapointevinet96}
Lapointe L and Vinet L 1996 {\em Comm. Math. Phys.\/} {\bf 178}(2) 425--452

\bibitem{ujinowadati96}
Ujino H and Wadati M 1996 {\em J. Phys. Soc. Jpn.\/} {\bf 65} 2423--2439

\bibitem{hikamiwadati93}
Hikami K and Wadati M 1993 {\em J. Phys. Soc. Jpn.\/} {\bf 62} 4203--4217

\bibitem{polychronakos93}
Polychronakos A~P 1993 {\em Phys. Rev. Lett.\/} {\bf 70}(15) 2329--2331

\bibitem{bakerforrester97}
Baker T and Forrester P 1997 {\em Comm. Math. Phys.\/} {\bf 188}(1) 175--216

\bibitem{BakerForrester97B}
Baker T and Forrester P 1997 {\em Nucl. Phys. B\/} {\bf 492} 682 -- 716

\bibitem{stanley89}
Stanley R~P 1989 {\em Adv. Math.\/} {\bf 77} 76--115

\bibitem{arfken}
Arfken G and Weber H 2005 {\em Mathematical Methods for Physicists\/}
  (Elsevier)

\bibitem{szego}
Szeg{\"o} G 1975 {\em Orthogonal Polynomials\/} (American Mathematical Society)

\bibitem{feller}
Feller W 1968 {\em An Introduction to Probability Theory and Its
  Applications\/} (Wiley)

\bibitem{dunkl02}
Dunkl C~F 2002 {\em J. Phys. A: Math. Gen.\/} {\bf 35} 10391

\bibitem{frahm93}
Frahm H 1993 {\em J. Phys. A: Math. Gen.\/} {\bf 26} L473

\bibitem{barbafinkel08}
Barba J~C, Finkel F, Gonz{\'a}lez-L{\'o}pez A and Rodr{\'i}guez M~A 2008 {\em
  EPL (Europhysics Letters)\/} {\bf 83} 27005

\bibitem{DumitriuEdelman05}
Dumitriu I and Edelman A 2005 {\em Ann. I. H. Poincare B\/} {\bf 41} 1083--1099

\bibitem{demni08C}
Demni N 2008 {\em SIGMA\/} {\bf 4} 75

\bibitem{yamamototsuchiya96}
Yamamoto T and Tsuchiya O 1996 {\em J. Phys. A: Math. Gen.\/} {\bf 29} 3977

\bibitem{DumitriuEdelman07}
Dumitriu I, Edelman A and Shuman G 2007 {\em J. Symb. Comp.\/} {\bf 42} 587 --
  620

\end{thebibliography}

\end{document}